\newtheorem{thm}{Theorem}[section]
\newtheorem{prop}[thm]{Proposition}
\newtheorem{lem}[thm]{Lemma}
\newtheorem{cor}[thm]{Corollary}
\newtheorem{propty}{Property}
\newtheorem{claim}{Claim}
\newtheorem*{reduction}{Reduction}
\begin{document}

\title{Oriented Colourings of Graphs with Maximum Degree Three and Four}

\maketitle

\begin{center}
	\large{Christopher Duffy$^{a}$, Gary MacGillivray$^{b}$, \'{E}ric Sopena$^c$}
\end{center}

\begin{center}
	$^{a}$Department of Mathematics and Statistics,  University of Saskatchewan, CANADA.\\
	$^{b}$Department of Mathematics and Statistics, University of Victoria, CANADA.\\
	$^{c}$ Univ. Bordeaux, Bordeaux INP, CNRS, LaBRI, UMR5800, F-33400 Talence, FRANCE.
\end{center}

\begin{abstract}
We show that any orientation of a graph with maximum degree three has an oriented 9-colouring, and that any orientation of a graph with maximum degree four has an oriented 69-colouring.  These results improve the best known upper bounds of 11 and 80, respectively.
\end{abstract}

\section{Introduction}
Recall that an \emph{oriented graph} is a digraph $G$ obtained from a simple, undirected graph $H$ by assigning to each edge one of its two possible \emph{orientations}.  
We refer to $G$ as \emph{an orientation of $H$}.
We refer to
$H$ as the \emph{underlying graph of $G$}, and denote it by $U(G)$.

An \emph{oriented $k$-colouring} of an oriented graph $G$ is a function
$c: V(G) \to \{0, 1, 2, \ldots, $ $k-1\}$ such that 
\begin{enumerate}
	\item $c(u) \neq c(v)$ for all $uv \in E(G)$, and
	\item for all $uv, xy \in E(G)$, if $c(u) = c(y)$, then $c(v) \neq c(x)$.
\end{enumerate}
Condition (1) requires that adjacent vertices are assigned different colours.
Condition (2) can be viewed as the requirement that the colour assignment takes the orientation 
into account: if there is an edge from a vertex of colour $i$ to a vertex of colour $j$,
then there is no edge from a vertex of colour $j$ to a vertex of colour $i$.
The \emph{oriented chromatic number} of the oriented graph $G$, denoted $\chi_o(G)$,  
is the smallest integer $k$ such that 
$G$ has an oriented $k$-colouring.
The survey \cite{SO15} gives a good overview of results and open problems in oriented
colouring and related areas.

It is well-known that the oriented chromatic number of $G$ can differ dramatically from the chromatic number of $U(G)$.
For example, for $n \geq 1$ consider the oriented complete bipartite graph $G$ with vertex set 
$V(G) = \{x_1, x_2, \ldots, x_n\} \cup \{y_1, y_2, \ldots, y_n\}$, and edge set
$E(G) = \{x_i y_j: 1 \leq i \leq j \leq n\} \cup \{y_j x_i: 1 \leq j < i \leq n\}$.
It is easy to see that any two vertices of $G$ are joined by a directed path of length at most two.
By condition (2) in the definition of an oriented $k$-colouring, the ends of a directed path of length 2 (a \emph{$2$-dipath})
must be assigned different colours.
It follows that $\chi_o(G) = 2n$, whereas $\chi(U(G)) = 2$.

For an integer $k \geq 1$, the problem of deciding whether an oriented graph $G$ has an oriented $k$-colouring
is Polynomial if $k \leq 3$, and NP-complete if $k \geq 4$ \cite{KM04}.
On the other hand, since the existence of an oriented $k$-colouring is expressible in monadic second-order logic,
it follows from Courcelle's Theorem that for each positive integer $t$ there is a polynomial-time algorithm to compute 
the oriented chromatic number of orientations of graphs with treewidth at most $t$.

Upper bounds on the oriented chromatic number are known for orientations of graphs 
belonging to many graph families, for
example partial $t$-trees \cite{SO97}, planar graphs \cite{RASO94}, Halin graphs \cite{DS14},
outerplanar graphs \cite{PS06}, hypercubes \cite{Wo07}, and grids \cite{FRR03}.

A Brooks' Theorem for the oriented chromatic number was proved by Sopena using constructive methods
\cite{SO97},
and then improved by Kostochka, Sopena and Zhu using an argument that involves the
probabilistic method  \cite{KSZ97}.  
In particular, if $G$ is an oriented graph for which the underlying graph has maximum degree
$\Delta$, then $\chi_o(G) \leq \Delta^22^{\Delta+1}$.
Further, for each $\Delta > 1$ there exists an orientation of a simple graph with maximum 
degree $\Delta$ and oriented chromatic number at least $2^{\Delta / 2}$   \cite{KSZ97}.

Suppose $G$ is an orientation of a graph with maximum degree $3$.
Sopena proved that $\chi_o(G) \leq 16$ and conjectured that any such connected graph has
an oriented colouring with at most $7$ colours \cite{SO97}.
If this conjecture is true, then the bound is best possible.
The upper bound has been subsequently reduced to $11$ \cite{SV96}.
The main result of this paper reduces it to $9$.

Now suppose $G$ is an orientation of a graph with maximum degree 4.
The bound $\chi_o(G) \leq 80$ can be obtained by
combining the theorem of Raspaud and Sopena 
that any orientation of a graph with acyclic chromatic number $k$
has oriented chromatic number at most $k 2^{k-1}$ with 
the upper bound of 5 on the acyclic chromatic number of a graph with maximum degree 4~\cite{Gr83}.
We use similar ideas as in the proof of the main result to show 
that any such oriented graph has oriented chromatic number at most 69.

The remainder of the paper is organized as follows.  
Background material and relevant definitions are reviewed in the next section.
Some preliminary results are then presented in Section 3.
The fourth section is devoted to the main result and its proof. 
In Section 5, ideas from the proof of the main result are used 
to improve the bound on the oriented chromatic number of orientations of graphs with maximum degree 4.
The paper concludes with some discussion and suggestions for future research.

\section{Background and Terminology}
\label{prelim}

We first review some terminology about directed graphs.

Let $G$ be an oriented graph, and $x \in V(G)$.
The \emph{out-neighbourhood} of $x$ is $N^+(x) = \{y: xy \in E(G)\}$,
and the \emph{in-neighbourhood} of $x$ is $N^-(x) = \{y: yx \in E(G)\}$.
The \emph{out-degree of $x$} is $|N^+(x)|$ and the \emph{in-degree of $x$} is $|N^-(x)|$.
The \emph{degree} of $x$ is $|N^+(x) \cup N^-(x)|$,
i.e.,  its degree in $U(G)$.
For this reason, we use $\Delta$ to denote the maximum degree of a vertex of $G$.

A graph or oriented graph is called \emph{subcubic} if $\Delta \leq 3$ and \emph{subquartic} if $\Delta \leq 4$.
It is \emph{properly subcubic} if $\Delta \leq 3$ and there is a vertex of degree at most 2, i.e., if it is a proper subgraph of a cubic graph.
We define the term \emph{properly subquartic} analogously.

The  vertex $x$  is a \emph{source} if $N^-(x) = \emptyset$, 
and a \emph{sink} if $N^+(x) = \emptyset$.
A \emph{universal source} is a source such that $N^+(x) = V(G) \setminus \{x\}$,
and a \emph{universal sink} is a sink such that $N^-(x) = V(G) \setminus \{x\}$.

The \emph{distance} from $x$ to $y$ is the smallest length of a directed path from $x$ to $y$, or infinity if no 
such path exists. 
The \emph{weak distance between $x$ and $y$} is 
the minimum of the distance from $x$ to $y$ and the distance from $y$ to $x$. This parameter is $\infty$ if 
there is neither a directed path from $x$ to $y$  nor a directed path from $y$ to $x$.
The \emph{weak diameter} of an oriented graph $G$ is the maximum of the weak distance between any two distinct vertices of $G$.

The \emph{converse} of an oriented graph $G$ is the oriented graph obtained by reversing the orientation of each of its arcs. 

We next review some background information about homomorphisms, tournaments and their relationship to oriented colourings.

Let $G$ and $H$ be oriented graphs.

A \emph{homomorphism of $G$ to $H$} is a function $\phi: V(G) \to V(H)$ so that if $uv \in E(G)$, then ${\phi(u)\phi(v) \in E(H)}$. 
If there exists a homomorphism $\phi$ of $G$ to $H$, then we write $\phi: G \to H$ and we say that $\phi$ is a $H$-colouring of $G$.
When the name of the function $\phi$ is clear from the context, or not important, we write $G \to H$.
A homomorphism $\phi: G \to G$ is called an \emph{automorphism} when $\phi$ is a bijection.
We say $G$ is \emph{vertex transitive} when for all $y,z \in V(G)$ there exists an automorphism $\beta$ such that $\beta(y) = z$.
We say $G$ is \emph{arc transitive} when for all $uv, wx \in E(G)$ there exists an automorphism $\rho$ such that $\rho(u) = w$ and $\rho(v) = x$.

An oriented $k$-colouring of an oriented graph $G$ can equivalently be defined 
as a homomorphism to some oriented graph $H$ on $k$ vertices. 
Since  $G \to H$ implies that $G \to H^\prime$ whenever $H$ is a subgraph of $H^\prime$,
the oriented graph $H$ in the previous statement can be assumed to be a tournament.

Let $\mathcal{F}$ be a family of oriented graphs.
We define $\chi_o(\mathcal{F})$ to be the least integer $k$ such that $\chi_o(F) \leq k$ for all $F \in \mathcal{F}$, 
or infinity if no such $k$ exists.
An oriented graph $H$ is a 
\emph{universal target} for $\mathcal{F}$ if
$F \to H$ for every $F \in \mathcal{F}$. 
If $H$ is a universal target for $\mathcal{F}$, then $\chi_o(\mathcal{F}) \leq |V(H)|$.

We now define a useful family of tournaments which serve as universal targets for some families of oriented graphs.
Let $q \equiv 3 \pmod 4$ be a prime power.
The \emph{Paley tournament}, or \emph{non-zero quadratic residue tournament},  $QR_q$,
is defined to have vertex set $\{0,1,2, \dots, q-1\}$ and an arc from $i$ to $j$ if and only if 
$j-i \not\equiv 0 \pmod q$ is a (non-zero) quadratic residue in the field of order $q$.

The Payley tournament $QR_7$ is a universal target for the family of orientations of partial 2-trees, and hence also for the 
family of orientations of outerplanar graphs \cite{SO97}.
The proof of this result 
relies on the fact that $QR_7$ has the property that
for any two different vertices $x$ and $y$, and
any $a, b \in \{+, -\}$, there exists at least one vertex $z$ such that $x \in N^a(z)$ and $y \in N^b(z)$.
Hence, if a vertex $v$ of the oriented graph $G$ has degree at most $2$, then a homomorphism $\phi:(G-v) \to QR_7$
can be extended to a homomorphism of $G$ to $QR_7$, regardless of the orientations of the arcs incident with $v$,  provided that either the neighbours of $v$ have different images under $\phi$, or $v$ is a source or a sink. 

The property in the previous paragraph is generalized as follows.
Let $i$ and $j$ be positive integers. 
A tournament $T$ has \emph{Property $P_{i,1}$} if,  for every $i$-subset $\{x_1, x_2, \ldots, x_i\} \subset V(T)$ 
and every vector $(n_1,n_2, \dots, n_i)$, where each entry $n_r \in \{+, -\}$, 
there exists a vertex $y$ such that $x_r \in N^{n_r}(y)$ for $r = 1, 2, \ldots, i$.
As mentioned above, the Payley tournament $QR_7$ has Property $P_{2, 1}$.
Let $\mathcal{A}$ denote the statement that $x_r \in N^{n_r}(y)$ for $r = 1, 2, \ldots i$.
A tournament $T$ has \emph{Property $P_{i, j}$} if for every $i$-subset $\{x_1, x_2, \ldots, x_i\} \subset V(T)$ 
and every vector $(n_1,n_2, \dots, n_i)$, where each entry $n_r \in \{+, -\}$, 
there are $j$ different vertices $y_1, y_2, \ldots, y_j$ for which statement $\mathcal{A}$ is true.
Property $P_{i,j}$ is related to the subject of \emph{$n$-existentially closed} tournaments (see \cite{BEH81}, \cite{B09} and \cite{BC06}).
The Paley tournaments are a well-studied family of tournaments that have Property $P_{i, j}$ once the number of vertices is sufficiently large \cite{B09}.  

We will eventually show that $QR_7$ is a universal target for the family of 
oriented properly subcubic connected graphs with no source vertex of degree 3 and no sink vertex of degree 3, and 
that $QR_{67}$ is a universal target for the family of properly subquartic connected oriented graphs.
These results will then be used to obtain our bounds on the oriented chromatic number of
subcubic and subquartic graphs.

\section{Preliminary results: oriented cliques}

An \emph{oriented clique}, or  \emph{oclique}, is 
an oriented graph $G$ such that $\chi_o(G) = |V(G)|$. 
The definition arises from the idea that complete graphs are the only graphs for which the chromatic number equals the number of vertices.

Oriented cliques are completely classified by weak diameter.

\begin{thm} [\cite{S14}]
	An oriented graph is an oriented clique if and only if it has weak diameter at most $2$.
\end{thm}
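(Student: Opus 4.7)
The plan is to establish the biconditional by arguing each direction via the observation that two vertices at weak distance at most $2$ must receive distinct colours in any oriented colouring.

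For the easy direction, I would start by noting that in any oriented colouring of $G$, condition (1) forces endpoints of an edge (weak distance $1$) to receive different colours, and condition (2) forces the endpoints of a $2$-dipath to receive different colours. Indeed, given a $2$-dipath $a \to b \to c$, applying condition (2) to the pair of arcs $ab$ and $bc$ with $u=a, v=b, x=b, y=c$ shows that $c(a) = c(c)$ would imply the forbidden equality. Consequently, if every two distinct vertices of $G$ are at weak distance at most $2$, all vertices of $G$ must receive pairwise distinct colours, so $\chi_o(G) \geq |V(G)|$, and combined with the trivial upper bound this gives $\chi_o(G) = |V(G)|$.

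For the other direction, I would argue the contrapositive: if $G$ has weak diameter at least $3$, I produce an oriented colouring using fewer than $|V(G)|$ colours. Pick two vertices $p, q$ whose weak distance is at least $3$. Assign $p$ and $q$ the same colour, and give every other vertex a private colour, for a total of $|V(G)|-1$ colours. Condition (1) holds because $p$ and $q$ are non-adjacent (weak distance $\geq 3$ forbids an arc between them) and all other pairs are colour-distinct. To check condition (2), I would suppose for contradiction that arcs $ab$ and $cd$ of $G$ satisfy $c(a)=c(d)$ and $c(b)=c(c)$; since only $p$ and $q$ share a colour, this pins down the possibilities for $\{a,d\}$ and $\{b,c\}$.

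The crux, and the place to be careful, is the case analysis of how the special pair $\{p,q\}$ can appear among the coordinates of the two arcs. I would split into the cases $a=d$ vs.\ $\{a,d\}=\{p,q\}$, and in each further split on $b=c$ vs.\ $\{b,c\}=\{p,q\}$. Every sub-case will lead either to a digon in $G$ (ruled out because $G$ is oriented), a loop, an arc $pq$ or $qp$ (contradicting non-adjacency of $p$ and $q$), or a $2$-dipath between $p$ and $q$ in one of the two directions (contradicting weak distance $\geq 3$). The main obstacle is organising this bookkeeping cleanly so that no sub-case is missed; once that is done, the colouring is valid, witnessing $\chi_o(G) \leq |V(G)|-1$ and completing the proof.
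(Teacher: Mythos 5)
Your argument is correct. Note, however, that the paper itself gives no proof of this statement: it is quoted as a known result from the cited reference, so there is no in-paper argument to compare yours against. What you wrote is the standard proof of that cited result: weak distance at most $2$ forces distinct colours (condition (1) for an arc, condition (2) applied to the two arcs of a $2$-dipath), giving $\chi_o(G)=|V(G)|$; conversely, if some pair $p,q$ has weak distance at least $3$, identifying their colours and colouring all other vertices privately yields an oriented $(|V(G)|-1)$-colouring, since your four-way case analysis ($a=d$ or $\{a,d\}=\{p,q\}$, crossed with $b=c$ or $\{b,c\}=\{p,q\}$) only ever produces a digon, a loop, an arc between $p$ and $q$, or a $2$-dipath between them, all of which are excluded. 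The one point to make explicit when writing it up fully is the degenerate situation where the two arcs in condition (2) coincide, which your case split does cover (it forces $\{a,b\}=\{p,q\}$ adjacent, a contradiction), but which is easy to overlook in the bookkeeping.
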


It is therefore easy to check whether a given oriented graph is an oclique.  On the other hand, it is NP-complete to decide whether a given graph can be oriented to be an oclique
\cite{BDS17}.

We now show that the maximum size of an oriented clique with maximum degree $\Delta$ is $O(\Delta^2)$.
Recall that, by contrast, there exist oriented graphs with degree $\Delta$ and oriented chromatic number at least $2^{\Delta / 2}$.  

\begin{prop}
	If $G$ is an oriented clique with $n$ vertices and maximum degree $\Delta$, then $n \leq \frac{1}{2} + \frac{(\Delta+1)^2}{2}$.
\end{prop}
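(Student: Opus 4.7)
The plan is to use a double-counting argument leveraging Theorem~1.1 (the weak-diameter characterization of ocliques). Since $G$ has weak diameter at most $2$, every unordered pair $\{u,v\}$ of distinct vertices either forms an edge of $U(G)$ or is joined by a directed path of length $2$ (a $2$-dipath) in at least one direction.

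Denote the edge set of $U(G)$ by $E$, and let $\mathcal{P}_2$ be the set of $2$-dipaths in $G$. The observation above yields
\[
\binom{n}{2} \;\leq\; |E| \;+\; |\mathcal{P}_2|,
\]
because every unordered pair is either an edge or gives rise to at least one $2$-dipath between its endpoints. I would then bound each term separately. The handshake bound gives $|E| \leq n\Delta/2$. For the $2$-dipaths, I'd count them by their middle vertex: a $2$-dipath $u \to v \to w$ is determined by a choice of $u \in N^-(v)$ and $w \in N^+(v)$, so
\[
|\mathcal{P}_2| \;=\; \sum_{v \in V(G)} |N^-(v)|\,|N^+(v)|.
\]
Since $G$ is oriented (no $2$-cycles), $|N^-(v)| + |N^+(v)| \leq \Delta$, and AM-GM gives $|N^-(v)|\,|N^+(v)| \leq \Delta^2/4$, so $|\mathcal{P}_2| \leq n\Delta^2/4$.

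Substituting the two bounds into the main inequality yields
\[
\frac{n(n-1)}{2} \;\leq\; \frac{n\Delta}{2} + \frac{n\Delta^2}{4},
\]
and dividing by $n$ and rearranging gives $n \leq 1 + \Delta + \Delta^2/2 = \tfrac{1}{2} + \tfrac{(\Delta+1)^2}{2}$, as claimed.

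The only subtle point — and the one I would be most careful to articulate — is the first inequality: each non-edge pair is witnessed by \emph{at least one} $2$-dipath, so the count of unordered pairs connected by some $2$-dipath is at most $|\mathcal{P}_2|$ (distinct $2$-dipaths may correspond to the same endpoint pair, but that only strengthens the inequality). Beyond this, the argument is routine calculation.
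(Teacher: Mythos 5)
Your proof is correct and takes essentially the same double-counting approach as the paper: the paper's lower bound $\frac{1}{2}\sum_{x}(n-d(x)-1)$ on the number of $2$-dipaths is exactly your count $\binom{n}{2}-|E|$ of non-adjacent pairs, and both arguments bound the $2$-dipaths by their middle vertices. The only cosmetic difference is that you apply the AM--GM bound with $\Delta$ immediately, while the paper keeps the per-vertex bound $\lceil d(x)/2\rceil\lfloor d(x)/2\rfloor$ and substitutes $d(x)\le\Delta$ at the end.
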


\begin{proof}
	Let $G$ be an oriented clique with maximum degree $\Delta$. We proceed by counting $2$-dipaths in $G$. Since $G$ has weak diameter $2$ each vertex is joined by a 2-dipath (in some direction) to all vertices of $G$ to which it is not adjacent. Therefore the number of $2$-dipaths in $G$ is at least $$\frac{1}{2}\sum_{x\in V(G)} \left({n - d(x) -1}\right),$$ where $d(x)$ is the degree of $x$ in $U(G)$. 
	We further observe that each vertex of $G$ is the midpoint of at most $\lceil \frac{d(x)}{2} \rceil \lfloor \frac{d(x)}{2} \rfloor$  $2$-dipaths. Therefore the number of $2$-dipaths in $G$ is at most $$\sum_{x \in V(G)} \left\lceil \frac{d(x)}{2} \right\rceil \left\lfloor \frac{d(x)}{2} \right\rfloor.$$ Combining together these two observations yields the following inequality.
	
	$$\frac{1}{2}n^2  \leq  \sum_{x\in V(G)}   \frac{d(x)^2}{4} + \frac{d(x) +1}{2} = \frac{n}{4} + \sum_{x\in V(G)} \frac{(d(x)+1)^2}{4}$$
	
	The result follows on observing that $d(x) \leq \Delta$ for all $x \in V(G)$.
\end{proof}

For $\Delta=3$ the above bound yields $8$. However by computer search, neither of the two $3$-regular graphs on $8$ vertices with diameter $2$ can be oriented to be an oclique.
 Figure~\ref{7clique} gives an oriented clique that is an orientation of a properly subcubic graph on $7$ vertices. 
 
 \begin{prop}
 The largest number of vertices in a subcubic oclique is 7.
\end{prop}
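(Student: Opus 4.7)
The plan is to sandwich the answer between $7$ and $8$ using the previous proposition and then eliminate the value $8$ by a short degree-sequence case analysis combined with the finite check that the authors have already highlighted. Applying the previous proposition with $\Delta = 3$ gives $n \leq \tfrac{1}{2} + \tfrac{16}{2} = \tfrac{17}{2}$, so a subcubic oclique has at most $8$ vertices, and Figure~\ref{7clique} shows the value $7$ is attained. So it suffices to rule out the possibility that some subcubic oclique has exactly $8$ vertices.

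Suppose $G$ is such an $8$-vertex oclique. Revisiting the counting in the previous proof with $n = 8$ sharpens the inequality to
\[
\sum_{x \in V(G)} (d(x)+1)^2 \geq 120.
\]
Since $(d(x)+1)^2 \leq 16$, at most one vertex of $G$ can have degree strictly less than $3$: already two vertices of degree $2$ would push the right-hand side down to $2 \cdot 9 + 6 \cdot 16 = 114$, and smaller degrees fare even worse. The remaining scenario with exactly one vertex of degree $2$ is killed by the handshake lemma (the putative degree sum $2 + 7 \cdot 3 = 23$ is odd), and scenarios with a vertex of degree $0$ or $1$ violate the counting inequality itself. So $G$ is $3$-regular, and because an oclique has weak diameter $2$, the underlying graph $U(G)$ has diameter at most $2$.

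The last step is to invoke the fact, noted immediately before the statement, that there are exactly two $3$-regular graphs on $8$ vertices of diameter $2$, and that a computer search certifies that neither of them admits an orientation of weak diameter $2$. This contradiction completes the proof. The main obstacle is precisely this last exclusion: the counting plus parity argument disposes of all other possible underlying graphs very cleanly, but narrowing down to the two candidate $8$-vertex diameter-$2$ cubic graphs and then checking that none of their orientations is an oclique is an unavoidable finite case analysis, and the authors' reliance on a computer search suggests that a short structural argument for this final step is not readily available.
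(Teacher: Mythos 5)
Your proof is correct and takes essentially the same route as the paper, which states this proposition without a formal proof and relies on the preceding remarks: the bound $n\le 8$ from the counting proposition, the $7$-vertex oclique of Figure~\ref{7clique}, and the computer-search exclusion of the two cubic diameter-$2$ graphs on $8$ vertices. Your counting-plus-parity argument merely makes explicit the reduction to the $3$-regular case that the paper leaves implicit (it can also be seen directly: a vertex of degree at most $2$ in a subcubic graph of diameter $2$ reaches at most $1+2+4=7$ vertices, so an $8$-vertex example must be $3$-regular).
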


For $\Delta = 4$ the above bound yields $13$. 
Computer search by repeated random sampling  failed to yield an oclique on $12$ or $13$ vertices. 
Figure~\ref{7clique} gives an oriented clique that is an orientation of a $4$-regular graph on $11$ vertices. 
We conjecture  the maximum order of an oriented clique with maximum degree $4$ to be $11$.

\begin{figure}
	\begin{center}
		\includegraphics[scale=1]{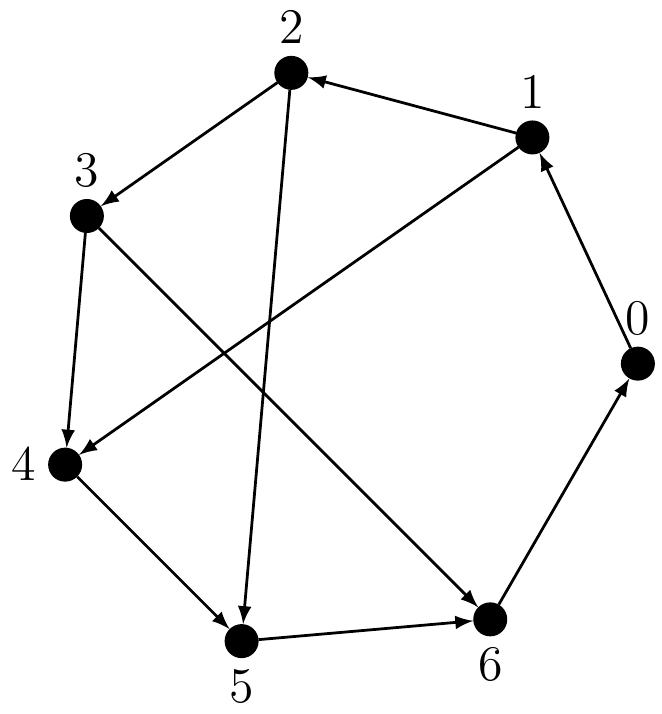}\quad
		\includegraphics[scale=1]{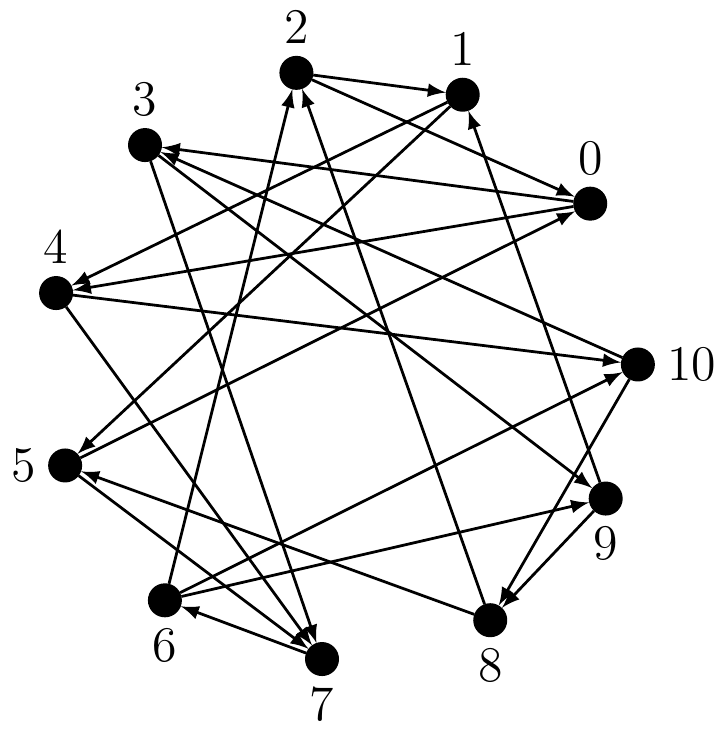}		
	\end{center}
	\caption{Oriented cliques on $7$ and $11$ vertices}
	\label{7clique}
\end{figure}

 \section{Oriented Colourings of Graphs with Maximum Degree Three} \label{cubic}

In this section we  show that every properly subcubic graph that contains neither a source vertex of degree 3 nor a sink vertex of degree 3  admits a homomorphism to $QR_7$. 
We then use this fact to show that every member of $\mathcal{F}_3$ admits an oriented $9$-colouring,
where  $\mathcal{F}_3$  is the family of oriented connected subcubic graphs. 

We begin by observing properties of $QR_7$.  Proofs of properties that are well-known or easy to check are omitted.

\begin{propty} \label{propty:transitive} 
	$QR_7$ is arc transitive and vertex transitive.
\end{propty}

\begin{propty} \label{propty:converse}
	$QR_7$ is isomorphic to its converse.
\end{propty}

\begin{propty}  \label{propty:pij} \cite{GS71}
	$QR_7$ has Property $P_{2,1}$. 
\end{propty}

\begin{propty}\label{propty:backpij}
	For every $x \in V(QR_7)$ 
		\begin{enumerate}
			\item there is a pair of distinct arcs $i_1j_1, i_2j_2 \in E(QR_7)$ such that $i_1,i_2,j_1,j_2$ are all out-neighbours of $x$;
			\item there is a pair of  distinct arcs $i_1j_1, i_2j_2 \in E(QR_7)$ such that $i_1,i_2,j_1,j_2$ are all in-neighbours of $x$; and
			\item there is a pair of distinct arcs $i_1j_1, i_2j_2 \in E(QR_7)$ such that $i_1,j_1$ are out-neighbours of $x$ and $i_2,j_2$ are in-neighbours of $x$.
		\end{enumerate}
\end{propty}

\begin{propty} \label{propty:neighbourhood} 
	For every  $xy\in E(QR_7)$,
	\begin{enumerate}
		\item $x$ and $y$ have exactly one common out-neighbour;
		\item $x$ and $y$ have exactly one common in-neighbour; 
		\item there is exactly one vertex that is an out-neighbour of $x$ and an in-neighbour of $y$; and
		\item there are exactly two vertices that are an in-neighbour of $x$ and an out-neighbour of $y$.
	\end{enumerate}
\end{propty}

The following is an immediate consequence of Property~\ref{propty:transitive}.

\begin{propty} \label{propty:cutedge}
	  An oriented graph $G$ with a cut arc $uv$ admits a homomorphism to $QR_7$ if and only if each component of $G-\{uv\}$ admits a homomorphism to $QR_7$.
\end{propty}

The next proposition follows directly from Property~\ref{propty:pij}.
\begin{propty} \label{propty:subdivide}
	If $G \to QR_7$ and $G^\star$ arises from replacing any arc of $G$ by an oriented path of length two, then $G^\star \to QR_7$.
\end{propty}

Figure~\ref{T4} gives a homomorphism of the three non-isomorphic non-transitive tournaments on four vertices to $QR_7$.  Thus we have:

\begin{propty} \label{propty:nonTrans}
	Every non-transitive tournament on four vertices admits a homomorphism to $QR_7$
\end{propty}

\begin{figure}
	\begin{center}
		\includegraphics[scale=1]{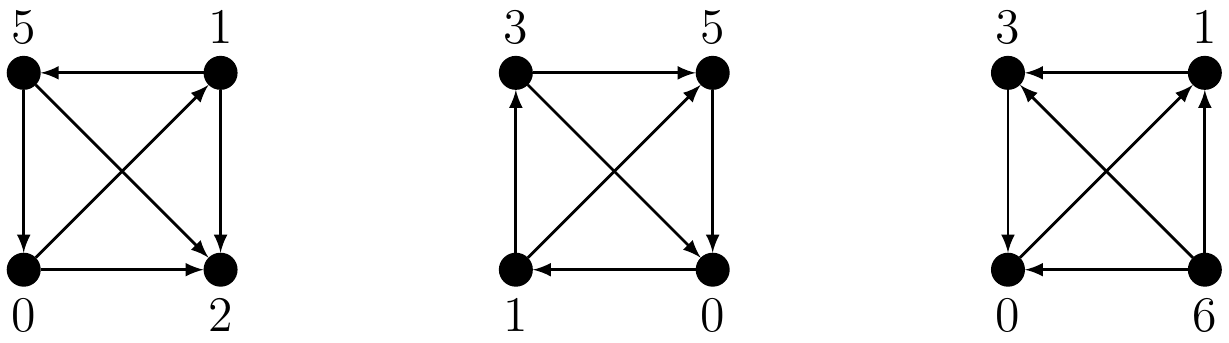}
	\end{center}
	\caption{$QR_7$-colourings of tournaments on four vertices.}
	\label{T4}
\end{figure}

In \cite{SO97} Sopena conjectures that $7$ colours suffice for an oriented colouring of any connected member of $\mathcal{F}_3$. 
While this may be true, it is not the case that $QR_7$ is a universal target for $\mathcal{F}_3$.  For example, the transitive tournament on $4$ vertices does not admit a homomorphism to $QR_7$.

Let $Z_1$, $Z_2$ and $Z_3$ be the set of labelled oriented graphs given in Figure~\ref{orient:family}.
Note that for each oriented graph in the figure, changing the direction of the $2$-dipath between $z_3$ and $z_4$ yields the same oriented graph with a different labelling. Let $Z_i^\prime$ ($1 \leq i \leq 3$) denote these labelled oriented graphs.
The same is true with respect to the $2$-dipath between $z_1$ and $z_2$ in $Z_3$. Let $Z_3^\star$ denote this labelled oriented graph.
For $Z_1$, changing the direction of the arc between $z_1$ and $z_2$ yields the same oriented graph. Let $Z_1^\star$ denote this oriented graph.
Further observe that $Z_2$ and $Z_3$ are each self converse, that is, they are each isomorphic to the oriented graph obtained by changing the direction of every arc.
Let $\mathcal{Z} = \{Z_1, Z_2, Z_3, \overline{Z_1}, \overline{Z_2}, \overline{Z_3}, 
Z_1^\prime, Z_2^\prime, Z_3^\prime, \overline{Z_1^\prime}, \overline{Z_2^\prime}, \overline{Z_3^\prime},
Z_1^\star, Z_3^\star, 
\overline{Z_1^\star}, \overline{Z_3^\star},
  Z_1^{\star\prime}, Z_3^{\star\prime},
 \overline{Z_1^{\star\prime}}, \overline{Z_3^{\star\prime}}\}$.
Note however that up to isomorphism, $\mathcal{Z}$ contains only $Z_1,Z_2, Z_3,$ and $\overline{Z_1}$.
And so when we consider  $Z \in \mathcal{Z}$ we may assume that $Z$ is a labelled copy of $Z_1,Z_2,Z_3$ or $\overline{Z_1}$.

\begin{figure}
	\begin{center}
		\includegraphics[scale=1]{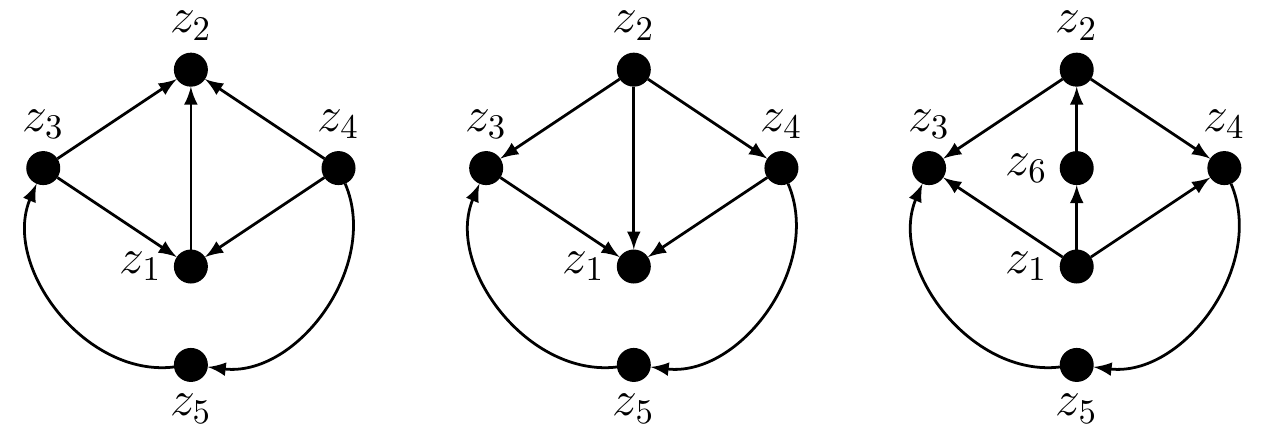}
	\end{center}
	\caption{Oriented graphs $Z_1$, $Z_2$ and $Z_3$ that do not admit a homomorphism to $QR_7$.}
	\label{orient:family}
\end{figure}

\begin{prop} \label{prop:nozed}
	No oriented graph in $\mathcal{Z}$ admits a homomorphism to $QR_7$.
\end{prop}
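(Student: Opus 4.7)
The plan is to reduce the verification to a small number of non-isomorphic cases and then use the structural properties of $QR_7$ from Section~\ref{prelim} to derive a contradiction in each. The text preceding the proposition already observes that up to isomorphism $\mathcal{Z}$ contains only the graphs $Z_1$, $Z_2$, $Z_3$, and $\overline{Z_1}$. Since $QR_7$ is isomorphic to its converse by Property~\ref{propty:converse}, an oriented graph $F$ admits a homomorphism to $QR_7$ if and only if $\overline{F}$ does. Hence it suffices to rule out homomorphisms from each of $Z_1$, $Z_2$, $Z_3$.

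For each of these three graphs, I would carry out a constrained case analysis as follows. First, apply vertex and arc transitivity of $QR_7$ (Property~\ref{propty:transitive}) to fix, without loss of generality, the image of one vertex or one arc of $Z_i$. Each $Z_i$ contains one or more two-dipaths between specified pairs of labelled vertices, and Property~\ref{propty:neighbourhood} gives exact counts of the number of vertices in $QR_7$ playing any given in/out role with respect to a fixed arc $xy$. This sharply restricts the possible images of the remaining labelled vertices along each two-dipath. Then, for every candidate partial assignment obtained in this way, verify whether the remaining arc(s) of $Z_i$---in particular, those whose endpoints lie in two different two-dipaths---can be realised in $E(QR_7)$. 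Each case should collapse into an incompatibility between the forced images.

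The main obstacle I anticipate is the combinatorial bookkeeping rather than any conceptual difficulty. The two-dipaths in $Z_i$ interact, and after fixing the image of a single vertex or arc, the stabiliser of that choice in $\mathrm{Aut}(QR_7)$ may not be large enough to kill all remaining symmetry; consequently, the case list must be enumerated with care to ensure completeness. Because each $Z_i$ has only a handful of vertices, an exhaustive search over the $7^{|V(Z_i)|}$ possible colour assignments is cheap, and I would use such a computer check as an independent rigorous verification to complement---or, if the hand argument becomes too tedious, to replace---the explicit enumeration.
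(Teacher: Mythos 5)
Your reduction to the representatives $Z_1$, $Z_2$, $Z_3$ (using Property~\ref{propty:converse} to dispose of $\overline{Z_1}$) is fine, and since the claim concerns finitely many maps, an exhaustive check would in principle settle it. But as written your argument has a genuine gap: you never actually derive a contradiction for any $Z_i$. The case analysis is only described, its outcome is asserted (``each case should collapse into an incompatibility''), and the remaining burden is shifted to an unexecuted computer search whose result you do not report. A proof must contain the verification, not a plan for one, and you yourself flag that the hand enumeration may be incomplete because the stabiliser of your initial choice need not kill the remaining symmetry.

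What is missing is a single observation that makes all enumeration unnecessary, and which you nearly have in hand since you already invoke Property~\ref{propty:neighbourhood}. In every $Z \in \mathcal{Z}$ the vertices $z_1$ and $z_2$ are adjacent or joined by a $2$-dipath, so any homomorphism $\phi$ to $QR_7$ satisfies $\phi(z_1) \neq \phi(z_2)$; moreover $z_3$ and $z_4$ play the same in/out role with respect to the pair $z_1, z_2$, a role for which the ``exactly one'' clauses of Property~\ref{propty:neighbourhood} admit a unique vertex of $QR_7$. Hence $\phi(z_3) = \phi(z_4)$. But $z_3$ and $z_4$ are the ends of a $2$-dipath, so their images must differ (otherwise the image of the middle vertex would be simultaneously an in-neighbour and an out-neighbour of the common image), a contradiction. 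This uniform argument is the paper's entire proof: no fixing of images by transitivity, no case split over $Z_1$, $Z_2$, $Z_3$, and no computer check are needed.
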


\begin{proof}
	Let $G$ be an oriented graph in $\mathcal{Z}$ such that there exists $\phi: G \to QR_7$. For each $Z \in \mathcal{Z}$ it must be that $\phi(z_1) \neq \phi(z_2)$. By Property~\ref{propty:neighbourhood} of $QR_7$, $\phi(z_3) = \phi(z_4)$. But $z_3$ and $z_4$ are the ends of a $2$-dipath, a contradiction.
\end{proof}

\begin{cor}
	Any oriented properly subcubic graph that contains a subgraph  from $\mathcal{Z}$ does not admit a homomorphism to $QR_7$.
	\label{nohomfromZ}
\end{cor}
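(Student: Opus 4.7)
The plan is to derive the corollary as a direct consequence of Proposition \ref{prop:nozed} via the standard observation that homomorphisms restrict to subgraphs. Specifically, I would argue by contraposition: suppose $G$ is an oriented properly subcubic graph containing a subgraph $H$ isomorphic to some $Z \in \mathcal{Z}$, and suppose, for a contradiction, that there exists a homomorphism $\phi\colon G \to QR_7$. Then the restriction $\phi|_{V(H)}\colon V(H) \to V(QR_7)$ is well-defined, and for every arc $uv \in E(H)$ we have $uv \in E(G)$, so $\phi(u)\phi(v) \in E(QR_7)$. Hence $\phi|_{V(H)}$ is a homomorphism from $H$ (equivalently, from $Z$) to $QR_7$, contradicting Proposition \ref{prop:nozed}.

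Since the argument never uses the hypothesis that $G$ is properly subcubic, the conclusion actually holds for any oriented graph containing a member of $\mathcal{Z}$ as a subgraph; the subcubic restriction merely reflects the context in which the corollary will be applied in the next section. The only point that warrants a brief explicit remark is that, by the discussion preceding Proposition \ref{prop:nozed}, when checking whether $H \cong Z$ for some $Z \in \mathcal{Z}$ it suffices to consider the four isomorphism classes $Z_1, Z_2, Z_3, \overline{Z_1}$, because the other labelled variants in $\mathcal{Z}$ coincide with one of these up to isomorphism. There is no real obstacle here: the proof is essentially a one-line appeal to the fact that a homomorphism of $G$ to $QR_7$ restricts to a homomorphism on every subgraph of $G$.
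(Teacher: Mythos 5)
Your proof is correct and matches the paper's (implicit) argument: the corollary is a direct consequence of Proposition~\ref{prop:nozed}, since any homomorphism of $G$ to $QR_7$ restricts to a homomorphism on the copy of $Z \in \mathcal{Z}$, which is impossible. Your additional remarks (that the subcubic hypothesis is not actually needed, and that only the four isomorphism types need be checked) are accurate and harmless.
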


\begin{prop}

If $Z^\prime \notin \mathcal{Z}$ is obtained by replacing an arc of $Z \in \mathcal{Z}$, by a 2-dipath, then $Z^\prime \to QR_7$. 
\label{subdivZ}
\end{prop}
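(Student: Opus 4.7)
The plan is to carry out a finite case analysis on the isomorphism class of $Z$---by the remark preceding the definition of $\mathcal{Z}$ there are only four, namely $Z_1$, $Z_2$, $Z_3$, and $\overline{Z_1}$---and on the orbit of the subdivided arc under $\mathrm{Aut}(Z)$. Since each $Z$ has only a handful of arcs, the total number of subcases is small and each can be dispatched explicitly.

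The guiding principle is the obstruction identified in the proof of Proposition~\ref{prop:nozed}: the sole barrier to a homomorphism $\phi \colon Z \to QR_7$ is that $\phi(z_1) \neq \phi(z_2)$ (forced by the arc or 2-dipath between them), whence Property~\ref{propty:neighbourhood} forces $\phi(z_3) = \phi(z_4)$, contradicting the 2-dipath between $z_3$ and $z_4$. For the subdivision $Z'$, my task is to show that this obstruction no longer bites unless $Z'$ itself already belongs to $\mathcal{Z}$.

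Concretely, I would split the argument according to where the subdivided arc $e$ of $Z$ lies. If $e$ lies on the 2-dipath between $z_3$ and $z_4$, then in $Z'$ those vertices are joined by a directed path of length $3$, so $\phi(z_3) = \phi(z_4)$ is now permitted; I would set $\phi(z_3) = \phi(z_4)$ equal to the common neighbour of $\phi(z_1), \phi(z_2)$ supplied by Property~\ref{propty:neighbourhood} and then extend to the remaining vertices using Property~\ref{propty:pij}. If $e$ is one of the arcs linking $\{z_3, z_4\}$ to $\{z_1, z_2\}$, then $\phi(z_3)$ and $\phi(z_4)$ are no longer simultaneously required to be common neighbours of $\phi(z_1), \phi(z_2)$ in the same in/out pattern, so Property $P_{2,1}$ (Property~\ref{propty:pij}) leaves enough freedom to choose $\phi(z_3) \neq \phi(z_4)$, and Property~\ref{propty:transitive} gives freedom to fix initial images. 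If $e$ lies on the $z_1 z_2$ connection, the subdivision either produces a labelled graph still belonging to $\mathcal{Z}$ (excluded by hypothesis) or lengthens the $z_1 z_2$ connection past the reach of the common-neighbour forcing, in which case Properties~\ref{propty:transitive}, \ref{propty:pij}, and \ref{propty:neighbourhood} let us build $\phi$ vertex by vertex.

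The main obstacle is the bookkeeping: for each $Z \in \{Z_1, Z_2, Z_3, \overline{Z_1}\}$ and each arc-orbit, one must identify whether the subdivision lands in $\mathcal{Z}$ (as a labelled graph, taking into account the primed and starred variants) and, when it does not, exhibit an explicit homomorphism of a graph on at most eight vertices to $QR_7$. With the strong closure properties of $QR_7$---vertex and arc transitivity, $P_{2,1}$, and the precise common-neighbour counts of Property~\ref{propty:neighbourhood}---each such homomorphism is short to write down, but care is required to ensure that no isomorphism type of subdivision is missed.
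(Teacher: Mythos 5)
Your proposal follows essentially the same route as the paper: a case analysis on where the subdivided arc sits, with the $z_1z_2$ arc of $Z_1$ (whose subdivision yields $\overline{Z_3}\in\mathcal{Z}$) excluded by hypothesis, and explicit extensions in the remaining cases via transitivity, Property~\ref{propty:pij} and the counts in Property~\ref{propty:neighbourhood}; the paper's only additional device is to reverse the subdivided arc and invoke Property~\ref{propty:nonTrans} when the arc joins $\{z_1,z_2\}$ to $\{z_3,z_4\}$. One detail to watch in your case (c) bookkeeping: subdividing the $z_1z_2$ arc of $Z_2$ still leaves $z_1,z_2$ joined by a $2$-dipath (so their images stay distinct), and the forcing is escaped not because the connection is out of reach but because the orientation of the image arc becomes free, making the relevant count in Property~\ref{propty:neighbourhood} equal to two---which the explicit homomorphism you promise to exhibit would indeed exploit.
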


\begin{proof}
Observe that replacing $z_1z_2$ in $Z_1$ by a $2$-dipath yields $\overline{Z_3}$.
Since $Z^\prime \notin \mathcal{Z}$, we may assume that $Z^\prime$ did not arise by this replacement.
	
First suppose the arc replaced by a 2-dipath is incident with $z_5$.
There is a homomorphism $\phi: Z \setminus\{z_5\} \to QR_7$ in which $\phi(z_3) = \phi(z_4)$.
Since every vertex of $QR_7$ is in a directed 3-cycle, $\phi$ can be extended to $Z^\prime$.

Similarly, if the arc replaced by a 2-dipath is incident with $z_6$, then there is a homomorphism
$Z^\prime \to QR_7$. 

Now suppose the arc between $z_1$ and $z_3$ is replaced by a 2-dipath.
Let $Z^{\prime\prime}$ be obtained from $Z$ by reversing the orientation of this arc, and deleting $z_5$ and $z_6$, if it exists.
Then $Z^{\prime\prime}$ is a subgraph of a non-transitive tournament on four vertices.
By Property~\ref{propty:nonTrans}, there is a homomorphism $\phi: Z^{\prime\prime} \to QR_7$
such that $\phi(z_1) \neq \phi(z_3)$ and $\phi(z_3) \neq \phi(z_4)$.
By Property~\ref{propty:pij}, the mapping $\phi$ can be extended to $Z^\prime$.

The remaining cases are similar.
\end{proof}

Consider the family $\mathcal{R}$ of oriented graphs constructed from those in $\mathcal{Z}$ by adding vertices $r_1$ and $r_2$, arcs $r_1z_3$ and $z_4r_2$, and deleting $z_5$.
Observe that any $R \in \mathcal{R}$,  identifying $r_1$ and $r_2$ (into a single vertex) gives the oriented graph from $\mathcal{Z}$ from which $R$ was constructed.

Let $H$ be an oriented properly subcubic graph that contains an element of $\mathcal{R}$ as a subgraph.
If $\phi: H \to QR_7$ is a homomorphism, then we must have $\phi(r_1) \neq \phi(r_2)$;  
otherwise, by the observation in the previous paragraph, $\phi$ implies the existence of a homomorphism from an element of $\mathcal{Z}$ to $QR_7$, contrary to Proposition~\ref{prop:nozed}.

We now describe a reduction that, if it can be applied, transforms a given oriented properly subcubic graph $G$ into an oriented properly subcubic graph $G^R$ with fewer vertices than $G$, and does so such that 
$G \to QR_7$ if and only if $G^R \to QR_7$.  Subsequently, it will therefore suffice to prove the desired result for oriented properly subcubic graphs that can not be reduced.  

\begin{reduction}
	Let $G$ be an oriented properly subcubic graph containing some $R \in \mathcal{R}$ as a subgraph. The properly subcubic graph $G^R$ is obtained from $G$ by
	\begin{itemize}
		\item deleting the vertices corresponding to $z_1,z_2,z_3,z_4$ and, if it exists, $z_6$;
		\item adding a new vertex $r$ together with the arcs $rr_2$ and $r_1r$.
	\end{itemize}
\end{reduction}
We call an oriented properly subcubic graph \emph{reducible} if the above reduction can be applied, i.e., if it contains a subgraph that belongs to $\mathcal{R}$, and  \emph{reduced} otherwise.  
Since each oriented graph in $\mathcal{R}$ contains either a source of degree 3 or a sink of degree $3$,  an oriented properly subcubic graph with no source of degree 3 and no sink of degree $3$ is reduced.
Observe that if $G$ is connected and $G^R$ is disconnected, then the copy of $R$ contains the vertex $z_6$ and $G$ contains a cut arc incident with $z_6$ (as all other vertices of $R$ except $r_1$ and $r_2$ have degree 3).

\begin{lem} [The Reduction Lemma]
	Let $G$ be a reducible oriented properly subcubic graph. The oriented graph $G$ admits a homomorphism to $QR_7$ if and only if $G^R$ admits a homomorphism to $QR_7$.
\end{lem}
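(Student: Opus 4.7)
My plan is to treat the two directions separately.

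For the forward direction, suppose $\phi: G \to QR_7$. I would set $\phi^R(v) = \phi(v)$ for all $v \in V(G^R) \setminus \{r\}$. The observation preceding the Reduction shows that $\phi(r_1) \neq \phi(r_2)$. By Property~\ref{propty:pij} ($P_{2,1}$), applied to the pair $\{\phi(r_1), \phi(r_2)\}$ with vector $(-,+)$, there is a vertex $y \in V(QR_7)$ with $\phi(r_1) y, y \phi(r_2) \in E(QR_7)$. Setting $\phi^R(r) = y$ extends $\phi^R$ to a homomorphism, since every other arc of $G^R$ is already an arc of $G$ and is respected by $\phi$.

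The reverse direction is the harder one. Given $\phi^R: G^R \to QR_7$, the arcs $r_1 r$ and $r r_2$ force $a := \phi^R(r_1) \neq \phi^R(r_2) =: b$. I would set $\phi(v) = \phi^R(v)$ for $v \in V(G) \cap V(G^R)$, and then extend $\phi$ to the deleted vertices $z_1, z_2, z_3, z_4$ (and $z_6$, if it is in the copy of $R$) so as to respect all arcs inside the copy of $R$. By the remark preceding the Reduction Lemma, each deleted vertex has degree $3$ inside $R$, so (since $G$ is properly subcubic) the deleted vertices have no other neighbours in $G$, and the constraints on the extension are internal to the copy of $R$ together with the fixed pair $(a,b)$.

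To build the extension I would first use arc transitivity (Property~\ref{propty:transitive}) to reduce $(a, b)$ to a canonical pair of distinct vertices in $QR_7$, and then deal with one case per isomorphism class in $\mathcal{Z}$; by the remark in the paragraph defining $\mathcal{Z}$, only $Z_1, Z_2, Z_3,$ and $\overline{Z_1}$ need to be considered. For each class, the key steps are: pick $\phi(z_3)$ as an out-neighbour of $a$ and $\phi(z_4)$ as an in-neighbour of $b$ using Property~\ref{propty:pij}; then use the detailed count in Property~\ref{propty:neighbourhood} to select $\phi(z_1)$ and $\phi(z_2)$ with the required common-neighbour relation to $\{\phi(z_3), \phi(z_4)\}$; and, where $z_6$ is present, place $\phi(z_6)$ by a further application of $P_{2,1}$, appealing to Property~\ref{propty:backpij} when multiple compatible vertices are needed.

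The main obstacle I foresee is the case analysis. In Proposition~\ref{prop:nozed} the forcing $\phi(z_1) \neq \phi(z_2)$ made $\phi(z_3) = \phi(z_4)$ unavoidable, which in $Z$ contradicted the $2$-dipath through $z_5$; in $R$ this tension disappears precisely because $a \neq b$ leaves room for a common out-neighbour of $a$ and in-neighbour of $b$. Verifying this uniformly across the four isomorphism classes, each with its orientation variants, is where the most careful bookkeeping will be required.
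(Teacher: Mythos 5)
Your forward direction is fine and matches the paper. The reverse direction, however, rests on a false premise: it is not true that every deleted vertex has degree $3$ inside $R$. The vertex $z_6$ (present when the underlying member of $\mathcal{Z}$ is $Z_3$) is the midpoint of the $2$-dipath between $z_1$ and $z_2$ and has degree only $2$ in $R$; the remark just before the Reduction Lemma explicitly allows it a third neighbour, since it notes that $G^R$ can become disconnected only via a cut arc incident with $z_6$. So in general $z_6$ has a neighbour $s$ outside the copy of $R$ that survives into $G^R$, and its colour $\beta(s)$ is already fixed and arbitrary. The constraints on the extension are therefore \emph{not} internal to $R$ together with the pair $(a,b)$, and your order of placement breaks precisely there: after $z_1$ and $z_2$ are placed, $z_6$ has three precoloured neighbours ($z_1$, $z_2$, $s$), and $QR_7$ does not have Property $P_{3,1}$, so a ``further application of $P_{2,1}$'' cannot finish. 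This degree-$3$-$z_6$ case is exactly where the paper's proof spends its effort: normalize $\beta(r)=0$, send $x_3$ and $x_4$ to $0$, choose the image of $x_6$ \emph{first} so that the arc to $s$ is respected (using the facts that every vertex of $QR_7$ has an in-neighbour in $\{3,5,6\}$ and an out-neighbour in $\{2,3,5\}$, up to the converse symmetry), and only then choose $x_1,x_2$ as the ends of a $2$-dipath through that image with ends in $\{1,2,4\}$.

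A secondary problem: choosing $\phi(z_3)$ as an out-neighbour of $a$ and $\phi(z_4)$ as an in-neighbour of $b$ independently will generally make them distinct, and then the exact counts in Property~\ref{propty:neighbourhood} work against you rather than for you: the structure of each $Z \in \mathcal{Z}$ requires $z_1$ and $z_2$ to be two \emph{distinct} vertices bearing the same in/out pattern to the pair $\{\phi(z_3),\phi(z_4)\}$, while for an arc of $QR_7$ there is exactly one vertex with that pattern. You need $\phi(z_3)=\phi(z_4)$, and the clean choice (the paper's) is $\beta(r)$ itself, which is automatically an out-neighbour of $\beta(r_1)$ and an in-neighbour of $\beta(r_2)$ because $r_1r, rr_2 \in E(G^R)$. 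Your ``main obstacle'' paragraph gestures at a common image, but the plan as written does not enforce it, and without it the placement of $z_1,z_2$ fails.
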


\begin{proof}
It suffices to consider oriented connected graphs.

	Let $G$ be a reducible oriented properly subcubic graph that admits a homomorphism $\phi$ to $QR_7$. 
	Let $x_i$ be the vertex corresponding to $z_i$ in the copy of $Z \in \mathcal{Z}$ constructed by identifying $r_1$ and $r_2$ in $G$. 
	Since $x_1$ and $x_2$ are adjacent,  $\phi(x_1) \neq \phi(x_2)$. 
	By Property~\ref{propty:neighbourhood}  we have $\phi(x_3) = \phi(x_4)$, which in turn implies that $\phi(r_1) \neq \phi(r_2)$. 
	Restricting $\phi$ to  $V(G^R) \cap V(G)$, and then extending it to $r$ by using Property~\ref{propty:pij}, yields a homomorphism of $G^R$ to $QR_7$.
	
	Assume there homomorphism $\beta: G^R \to QR_7$. By Property~\ref{propty:transitive}, we may assume that $\beta(r) = 0$. 
	
Suppose the vertex $x_6$ does not exist in $G$.
Then   the restriction of $\beta$ to $V(G) \cap V(G^R)$ can be extended to a homomorphism $G \to QR_7$.  
Note that $r, x_1, x_2, x_3, x_4 \not\in V(G)$.
Map each of $x_3$ and $x_4$ to $0$ and the remaining vertices using Property~\ref{propty:pij}. 	

Now suppose $x_6 \in V(G)$. 

If $x_6$ has degree 2 in $G$, then the restriction of $\beta$ to $V(G) \cap V(G^R)$ can be extended to a homomorphism $G \to QR_7$ by mapping $x_6$ to 0 and proceeding as above. 

So, suppose $x_6$ has degree 3 in $G$, then let $s \not\in \{x_1, x_2\}$ be the third vertex to which it is adjacent in $U(G)$. 
Since $s \in G^R$, $\beta(s)$ is defined.  Let $\beta(s) = k$.
Our goal is to extend  the restriction of $\beta$ to $V(G) \cap V(G^R)$ to all vertices of $G$ so that the direction of the arc between $x_6$ and $s$ is the same as that between $\beta(x_6)$ and $\beta(s)$.
By Property~\ref{propty:converse} of $QR_7$ we may, without loss of generality, assume  the arcs joining $x_1$ and $x_3$, and $x_1$ and $x_4$, are oriented such that $x_3x_1, x_4x_1 \in E(G)$. 
For every vertex of $QR_7$ there exists an element of $\{3,5,6\}$ for which it an out-neighbour, and an element of $\{2,3,5\}$ for which it is an in-neighbour.
Thus it is possible to choose an image for $x_6$ so that the orientation of the arc joining it and $s$ is preserved.
For any choice of image of $x_6$, there is a 2-dipath that starts in $\{1, 2, 4\}$ and ends in $\{1, 2, 4\}$ for which it is the midpoint.
Map $x_2$ and $x_1$ to the start and end of such a 2-dipath, respectively.
Finally, map $x_3$ and $x_4$ to 0.
This completes the proof.
\end{proof}

Consider any oriented graph $G$ with the property that a single reduction necessarily produces an oriented graph from $\mathcal{Z}$. 
Without loss of generality, we may assume that, in $G^R$, the vertex $r$ corresponds to $z_5$ in such a graph.
Therefore $G$ contains the configuration shown in Figure~\ref{orient:singlereduce}, or any one constructed by replacing one or both of the $2$-dipaths $x_4x_5x_3$ and $y_4y_5y_3$ with a single arc from its start vertex to its end vertex.  
In the figure, 
the 
undirected edges are assumed to be oriented so the resulting oriented graph is reducible to only a graph from $\mathcal{Z}$.

\begin{figure}
	\begin{center}
		\includegraphics{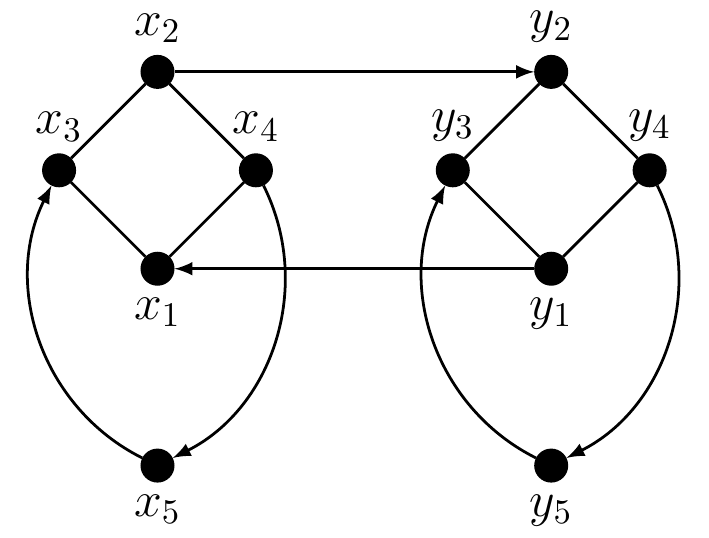}
	\end{center}
	\caption{A graph that reduces to a graph containing a member of $\mathcal{Z}$ with a single reduction.}
	\label{orient:singlereduce}
\end{figure}

We now show that any reduced oriented properly subcubic graph that does not have a subgraph  from $\mathcal{Z}$ admits a homomorphism to $QR_7$. In the sequel we use this result to prove that every oriented connected cubic graph admits a homomorphism to a tournament on at most $9$ vertices.
We begin with a technical lemma.

\begin{lem}
Let $G$ be an oriented graph with a vertex $w$ of degree 2 such that $G$ contains a subgraph $R \in \mathcal{R}$ 
that contains $w$.
If there is a sequence of reductions of $G$ that produces a reduced graph that contains  a subgraph belonging to $\mathcal{Z}$, then $G - w$ is reducible.
\label{reducetoZ} 
\end{lem}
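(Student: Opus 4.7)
My plan is to proceed by induction on the length $k$ of the given reduction sequence, after first restricting the possibilities for $w$. Inside any $R\in\mathcal{R}$, each of $z_1,z_2,z_3,z_4$ has degree $3$ and therefore, since $\Delta(G)\le 3$, has degree $3$ in $G$; so the degree-$2$ vertex $w$ must be one of $r_1$, $r_2$, or $z_6$ (the last only when $R$ is built from a member of $\mathcal{Z}$ that contains $z_6$). Using Property~\ref{propty:converse} to pass to the converse of $G$ when convenient, it suffices to treat the cases $w\in\{r_1,z_6\}$.

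If some $\mathcal{R}$-subgraph of $G$ avoids $w$, then $G-w$ is immediately reducible, so I would assume every $\mathcal{R}$-subgraph of $G$ contains $w$. Then the subgraph $\hat R\in\mathcal{R}$ on which the first reduction of the given sequence is performed contains $w$, and by the degree constraint $w$ plays in $\hat R$ the role of $r_1$, $r_2$, or $z_6$. Set $G_1=G^{\hat R}$. When $w\in\{r_1,r_2\}$, the vertex $w$ survives the reduction with its degree unchanged and the reduction sequence continues from $G_1$ with length $k-1$; for $k\ge 2$ I apply the induction hypothesis to $G_1$ to obtain an $\mathcal{R}$-subgraph $R'\subseteq G_1-w$.

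The main technical step is lifting $R'$ back to an $\mathcal{R}$-subgraph of $G-w$. The only vertex of $G_1$ absent from $G$ is the new vertex $r$ with $N(r)=\{r_1,r_2\}$. If $r\notin V(R')$ then $R'\subseteq G-w$ and we are done; otherwise $r$ must play a degree-$\le 2$ role in $R'$ (one of $r_1',r_2',z_6'$), and I would unfold $r$ by replacing it with the appropriate portion of $\hat R$ inside $G$, appealing to Proposition~\ref{subdivZ} to control the $2$-dipath substitutions. The base case $k=1$ (where $G_1$ is reduced and contains some $Z'\in\mathcal{Z}$) is handled by a parallel direct construction: using $Z'$ together with the surrounding structure of $\hat R$ in $G$, one locates an $\mathcal{R}$-subgraph of $G-w$ built around a configuration like the one shown in Figure~\ref{orient:singlereduce}.

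The main obstacle I anticipate is the detailed case verification in the lifting step, across each pairing of ``role of $w$ in $\hat R$'' with ``role of $r$ in $R'$,'' and each possible $Z_i$ from which $\hat R$ was built. The subcase $w=z_6$ is the most delicate of all, because $w$ is deleted by the first reduction rather than surviving into $G_1$; there I would argue directly, using the explicit structure of $\hat R-z_6$ and the two $2$-dipath patterns available in the $Z_i$'s, to find the required $R'\subseteq G-w$ (or else to contradict the assumption that the given reduction sequence ends at a graph containing a member of $\mathcal{Z}$).
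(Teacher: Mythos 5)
Your strategy (induction on the length of the reduction sequence, plus a lifting step through the new vertex $r$ created by the first reduction) is genuinely different from the paper's, which argues by contradiction from the other end of the sequence: the penultimate graph contains the configuration of Figure~\ref{orient:singlereduce}; every vertex introduced by a reduction has degree $2$, and a vertex playing $r_1$ or $r_2$ in a reduction stays adjacent to the newly introduced degree-$2$ vertex; since the configuration has minimum degree $2$ and no two adjacent degree-$2$ vertices, at most one of its vertices fails to lie in $G$, and deleting a single degree-$2$ vertex of it still leaves a member of $\mathcal{R}$, which then sits inside $G-w$. Your plan might be completable, but as written its crux is missing and the tool you name for it is the wrong one. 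Proposition~\ref{subdivZ} is a statement about homomorphisms to $QR_7$; it says nothing about containment of members of $\mathcal{R}$, so it cannot ``control'' anything in this purely structural lemma. Moreover, the ``unfolding'' fails in exactly the only hard case. If $r\in V(R')$ plays the role of $r_1'$ or $r_2'$, a one-vertex substitution suffices: the arcs $r_1r$ and $rr_2$ of $G_1$ correspond in $G$ to $r_1x_3$ and $x_4r_2$, so replace $r$ by the vertex of $\hat R$ playing $z_4$ or $z_3$ (it is not in $G_1$, hence not in $R'$, and has degree $3$, hence is not $w$). But if $r$ plays the role of $z_6'$, then $r$ is the middle of the $2$-dipath between $z_1'$ and $z_2'$, forcing $\{z_1',z_2'\}=\{r_1,r_2\}$; in $G$ these two vertices are not adjacent and are connected through $\hat R$ only by a path of length at least three, so splicing in ``the appropriate portion of $\hat R$'' produces a graph that is not a member of $\mathcal{R}$ (the defining $2$-dipath cannot be lengthened), and no member of $\mathcal{R}$ is obtained this way. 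In that situation the witnessing $\mathcal{R}$-subgraph of $G-w$ generally lies elsewhere and is guaranteed only by the later reductions, which a local unfolding never sees.

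The same defect affects your subcase $w=z_6$ of $\hat R$ and your base case $k=1$. Once $w$ is deleted by the first reduction, there is no reason for an $\mathcal{R}$-subgraph of $G-w$ to live anywhere near $\hat R-z_6$, so the purely local argument you sketch cannot succeed; the needed subgraph is produced by the remainder of the sequence, which is precisely the information the paper extracts from the penultimate configuration. Likewise, the verification you defer at $k=1$ --- that at most one degree-$2$ vertex of the configuration in Figure~\ref{orient:singlereduce} can be $w$ (or a created vertex), and that deleting any one such vertex still leaves a member of $\mathcal{R}$ --- is the heart of the argument, not a routine check. A natural repair of your induction is to switch the tracked vertex to $r$ when $w$ does not survive into $G_1$: for $k\ge 2$, either some $\mathcal{R}$-subgraph of $G_1$ avoids $r$, or the inductive hypothesis applied to the pair $(G_1,r)$ produces one, and in either case it lies in $G-w$; but even this still requires the Figure~\ref{orient:singlereduce} analysis at $k=1$. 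As submitted, the decisive cases are unproved.
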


\begin{proof}
Suppose $G - w$ is reduced.  We shall obtain a contradiction.

Since $G$ can be reduced to contain a subgraph belonging to $\mathcal{Z}$, in the penultimate step
the graph $G^\prime$ arising from the sequence of reductions contains the configuration in Figure
\ref{orient:singlereduce}.  
Notice that the only two vertices in the figure that could have been introduced in the sequence of reductions 
are those of degree 2.

Reduce $G$ using the copy of $R$ in the hypothesis. 
 If $G^R$ contains a subgraph belonging to $\mathcal{Z}$, then the configuration in Figure
\ref{orient:singlereduce} is a subgraph of $G$.  
At most one of the degree 2 vertices in the figure is $w$.
If it is deleted, the figure still contains a subgraph belonging to $\mathcal{R}$.
Therefore $G - w$ is reducible, a contradiction.

Otherwise, $G^R$ is reducible, and hence contains a subgraph belonging to $\mathcal{R}$.
Let $r^\prime$ denote the vertex introduced in the reduction that produced $G^R$.
Note that $r^\prime$ has degree 2.
If there is a subgraph belonging to $\mathcal{R}$ that does not contain the vertex $r^\prime$, 
then it is a subgraph of $G-w$, a contradiction.
Therefore it contains $r^\prime$.
Since $r^\prime$ has degree 2, it can correspond to $r_1, r_2$ or $z_6$ in this subgraph.

If it corresponds to $r_1$ or $r_2$, then after the reduction it still has degree 2 and is joined by an arc
(with some orientation) to the vertex of degree 2 introduced in the reduction.  
Notice that 
the configuration in Figure~\ref{orient:singlereduce} has all vertices of degree at least 2, and no 
adjacent vertices of degree 2.  
Therefore, no vertex of degree 2 that corresponds to $r_1$ or $r_2$ in a subgraph of $\mathcal{R}$
used in a reduction can be involved in the configuration in Figure~\ref{orient:singlereduce}.

Suppose $r^\prime$ corresponds to $z_6$.  
When the reduction is applied, this vertex is deleted and a new vertex of degree 2 is introduced.

Before the final reduction that produces a subgraph in $\mathcal{Z}$, the graph $H$ resulting from
the sequence of reductions contains the configuration in Figure~\ref{orient:singlereduce}.
By the above argument, at most one vertex of this configuration is not a vertex of $G$,
and that vertex has degree 2.
If that vertex is deleted, the figure still contains a subgraph belonging to $\mathcal{R}$.
Further, all of the vertices of this subgraph are vertices of $G-w$.
Hence $G-w$ is reducible, a contradiction.

\end{proof}

\begin{lem} \label{lem:heavyLift}
	Every reduced oriented connected properly subcubic graph that does not contain a subgraph isomorphic to an oriented graph in $\mathcal{Z}$ admits a homomorphism to $QR_7$.
\end{lem}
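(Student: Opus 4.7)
Proceed by induction on $|V(G)|$, taking $G$ to be a minimum counterexample. Base cases of small order are immediate, and by Property~\ref{propty:cutedge} I may assume $G$ has no cut arc. Since $G$ is properly subcubic, fix $v \in V(G)$ with $d(v) \leq 2$. If $d(v) \leq 1$, a homomorphism of $G - v$ extends to $v$ trivially, as every vertex of $QR_7$ has nonempty in- and out-neighbourhoods; so I assume $d(v) = 2$ with distinct neighbours $u_1, u_2$.

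The graph $G - v$ is properly subcubic (the $u_i$ drop to degree at most $2$), inherits the property of containing no subgraph in $\mathcal{Z}$, and is reduced, since any subgraph of $G - v$ is a subgraph of $G$. Each component of $G - v$ contains $u_1$ or $u_2$ by connectedness of $G$ and so is itself properly subcubic; the inductive hypothesis gives a homomorphism $\phi\colon G - v \to QR_7$. When both arcs incident with $v$ are oriented the same way, $\phi$ extends to $v$: if $\phi(u_1) = \phi(u_2) = \alpha$ then any of the three vertices of the appropriate in- or out-neighbourhood of $\alpha$ is a valid image for $v$; if $\phi(u_1) \neq \phi(u_2)$ then Property~$P_{2,1}$ supplies a common neighbour. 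Property~\ref{propty:converse} covers the reverse orientation symmetrically.

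The critical case is when the arcs at $v$ form a $2$-dipath $u_1 \to v \to u_2$: the extension requires $\phi(v) \in N^+(\phi(u_1)) \cap N^-(\phi(u_2))$, which is empty precisely when $\phi(u_1) = \phi(u_2)$, as $QR_7$ is a tournament. My plan is to force the images of $u_1, u_2$ to differ by applying the inductive hypothesis to the auxiliary oriented graph $G^{\dagger}$ obtained from $G - v$ by adding the arc $u_2 u_1$. Any homomorphism $G^{\dagger} \to QR_7$ then separates $\phi(u_1)$ from $\phi(u_2)$, and Property~$P_{2,1}$ completes the extension to $v$. Connectivity of $G^{\dagger}$ and the max-degree bound are immediate; the remaining obstacles are that $G^{\dagger}$ may fail to be properly subcubic when $v$ is the only vertex of degree at most $2$ in $G$ (so $G^{\dagger}$ becomes cubic), and that the new arc $u_2 u_1$ may complete a forbidden subgraph in $\mathcal{R}$ or $\mathcal{Z}$. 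I expect the first obstacle to require a separate sub-argument exploiting the rigid structure of such an ``almost cubic'' $G$, and the second to require a case analysis of the local configurations around $u_1, u_2$ created by the new arc, using the description of single-step reductions to $\mathcal{Z}$ from Figure~\ref{orient:singlereduce}, the Reduction Lemma, Lemma~\ref{reducetoZ}, and Property~\ref{propty:backpij}, so that in each case one either contradicts a hypothesis on $G$ or directly constructs a homomorphism $G \to QR_7$ by a local modification of $\phi$. This final case analysis is the main technical hurdle of the proof.
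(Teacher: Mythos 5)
Your outline correctly identifies the easy cases (degree at most~$1$, source/sink of degree~$2$) and the right general strategy -- the paper itself repeatedly uses your trick of deleting a low-degree vertex and adding an arc between its neighbours (its graphs $G'$, $H_u$, $A_u$, $A_{v_1}$, $A_{v_2}$ are exactly of this form). But there is a genuine gap: you explicitly defer the two situations where the induction does not apply, namely when the augmented graph $G^{\dagger}$ fails to be properly subcubic, and when the added arc $u_2u_1$ creates a subgraph in $\mathcal{Z}$ or in $\mathcal{R}$ (so that $G^{\dagger}$ is no longer reduced, or reduces to something containing a member of $\mathcal{Z}$). These are not edge cases to be cleaned up later; they are the entire content of the lemma, and they really do occur -- the hypotheses of the statement constrain only $G$, not $G^{\dagger}$. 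The paper's proof spends essentially all of its length on precisely this: it first shows that in a minimum counterexample the two neighbours of the degree-$2$ vertex must receive the same image in \emph{every} homomorphism of the reduced graph, then establishes a sequence of structural claims (no cut arc in $G$ or $G-z$; every $2$-edge cut has a common endpoint of degree~$2$; $G$ has a unique degree-$2$ vertex; the neighbours are nonadjacent; each is a source or sink in $G-z$; bounds on common neighbours; and the cases $|\{u_1,u_2,v_1,v_2\}|=2,3,4$), and in the cases where the augmented graph does contain a member of $\mathcal{Z}$ or is reducible it pins down the local configuration (Figures~\ref{orient:singlereduce}, \ref{orient:not2dp}, \ref{orient:4orientations}, \ref{orient:claim6II}, \ref{claim:notFourfig}) and exhibits explicit $QR_7$-colourings separating the two images, also invoking Lemma~\ref{reducetoZ} to control iterated reductions. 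None of that analysis is present in your proposal, and it cannot be waved through: without it one cannot rule out that the added arc always lands in a forbidden configuration. As written, the proposal is a plan whose main technical hurdle -- by your own description -- remains unexecuted, so it does not yet constitute a proof.
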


\begin{proof}
Let $G$ be a minimum counter-example  with respect to number of vertices and, subject to that, with respect to  number of arcs.  
If there is a vertex $x$ of degree 1, then a homomorphism $(G-x) \to QR_7$ can be extended to $G$.
Hence by hypothesis and choice of $G$ there is a vertex $z$ of degree 2.  Let $u$ and $v$ be the neighbours of $z$ in $U(G)$.
If  $z$ has in-degree 0 or out-degree 0 in $G$, then by Property~\ref{propty:pij} a homomorphism of $G-z$ to $QR_7$ can be extended to $G$, a contradiction.
Hence, without loss of generality, $u, z, v$ is a 2-dipath in $G$.
By Property~\ref{propty:pij} (again) and minimality, 
it must be that in every homomorphism $\phi: (G -z) \to QR_7$ we have $\phi(u) = \phi(v)$, otherwise $\phi$ can be extended to $z$.
Finally, note that if either $u$ or $v$ has degree one in $G - z$, then we need not have $\phi(u) = \phi(v)$. Hence both of these vertices have degree 2 in $G-z$.

Let $u_1$, $u_2$ (respectively $v_1$ and $v_2$) be the neighbours of $u$ (respectively $v$) in $G -z$.   
We proceed by establishing properties that $G$ must have, and then eventually show that no such graph $G$ exists. 
	
\begin{claim} \label{claim:cutclaim}
		Neither $G$ nor $G -z $ contains a cut arc.
\end{claim}
Property~\ref{propty:cutedge} and minimality imply that $G$ has no cut arc.  

Suppose $G -z$ contains a cut arc, $e$. 

Suppose $u$ and $v$ are in the same component of $(G - z) - e$.
By Property~\ref{propty:cutedge}, $(G-z) \to QR_7$ if and only if both components of $(G-z)-e \to QR_7$.
It follows that, as above, $u$ and $v$ must have the same image in any such homomorphism.
But then the mapping can not be extended to $z$.
Hence there is no homomorphism $(G-e) \to QR_7$, contrary to the minimality of $G$.

Now suppose that $u$ and $v$ are in different components of $(G-z)-e$.  
Let $A_u$ be the component of $(G-z)-e$ containing $u$ and let $A_v$ be the component containing $v$.
Let $e = xy$,  where $x$ is a vertex of $A_u$ and $y$ is a vertex of $A_v$.
By the minimality of $G$, there are homomorphisms  $\phi: A_u \to QR_7$ and $\rho: A_v \to QR_7$.
By Properties~\ref{propty:transitive} and~\ref{propty:converse}, we may assume $\phi(x) = 0$ and $\phi(u) \in \{0,1\}$.
Similarly,  we may assume $\rho(y) = 2$ and $\rho(v) \in \{2,3,5\}$. 
But then the function $\psi: (G-z) \to QR_7$ defined by
$$\psi(w) = \begin{cases}
\phi(w) & \mathrm{if}\ w \in V(A_u),\\
\rho(w) &  \mathrm{if}\ w \in V(A_v),\\
\end{cases}$$
is a homomorphism $(G-z) \to QR_7$ such that $\psi(u) \neq \psi(v)$, a contradiction.

\begin{claim} \label{claim:2edgecut}
		If $\{e_1,e_2\}$ is a minimal edge cut in $G-z$, then  $e_1$ and $e_2$ have a common endpoint of degree $2$.
\end{claim}
	
Suppose the contrary.  Then either $e_1$ and $e_2$ have a common endpoint of degree 3, or have no common endpoint.  We consider these cases in turn.
	
\emph{Case I: $e_1$ and $e_2$ have a common endpoint of degree $3$.} \\
Let $a$ be the hypothesized common endpoint of $e_1$ and $e_2$. 
If $b$ is the neighbour of $a$ (in $U(G)$) that is incident with neither  $e_1$ nor $e_2$, 
then it is easy to see that  $ab$ is a cut arc of $G-z$, contrary to Claim~\ref{claim:cutclaim}.
	
\emph{Case II: $e_1$ and $e_2$ do not have a common endpoint.}  \\
Since neither $e_1$ nor $e_2$ is a cut edge of $G-z$, the oriented graph  $(G - z) - \{e_1,e_2\}$ has exactly two components. 
Let $a_1$ and $b_1$ be the endpoints of the edge of $U(G)$ corresponding to $e_1$,
and 
let $a_2$ and $b_2$ be the endpoints of  the edge of $U(G)$ corresponding to $e_2$.
Assume  $a_1$ and $a_2$ are in the same component $(G-z)-\{e_1,e_2\}$, call it $A$. 
Let $B = (G-z) - A$.

Either $u$ and $v$ belong to the same component of $(G-z)-\{e_1,e_2\}$, or they belong to different components of $(G-z)-\{e_1,e_2\}$.  This leads to the following two subcases.	
	
\emph{Subcase II.i: $u$ and $v$ are in different components of $(G-z)-\{e_1,e_2\}$}.\\
Without loss of generality, assume $u \in V(A)$. 
By the minimality of $G$, there exist  homomorphisms  $\phi_A: {A \to QR_7}$ and $\phi_B: B \to QR_7$.
We consider possibilities depending on whether  $\phi_A(a_1) = \phi_A(a_2)$ and $\phi_B(b_1) = \phi_B(b_2)$.
 
 \emph{Subcase II.i.i} $\phi_A(a_1) \neq \phi_A(a_2)$ and $\phi_B(b_1) \neq \phi_B(b_2)$.\\
By  Property~\ref{propty:transitive}, it can be assumed that  $\phi_A(a_1) = 0$ and $\phi_A(a_2)=1$. 
 Since $\phi_B(b_1) \neq \phi_B(b_2)$, we may assume the existence of an arc between $b_1$ and $b_2$ (with some orientation). If such an arc does not exist, then we may add it so that it has the same orientation as the arc between $\phi_B(b_1)$ and $\phi_B(b_2)$ in $QR_7$. 
Table~\ref{claim:2edgecuttable} gives the possibilities for the orientations of an arc between $b_1$ and $b_2$
and partial homomorphisms $\alpha_B$ and $\alpha_B^\prime$ of $B$ to $QR_7$.  By Property~\ref{propty:transitive}, each of these can be extended to a homomorphism $B \to QR_7$.  In what follows we will assume $\alpha_B$ and $\alpha_B^\prime$ are homomorphisms. 
For each possibility, using $\phi_A, \alpha_B, \alpha^\prime_B$, and Property~\ref{propty:transitive}, we construct homomorphisms $\phi: (G - z) \to QR_7$ and $\phi^\prime: (G - z) \to QR_7$ by
$$\phi(w) = \begin{cases}
\phi_A(w)&  \mathrm{if}\ w \in V(A),\\
\alpha_B(w) &  \mathrm{if}\ w \in V(B),\\
\end{cases}
\quad\quad
\phi^\prime(w) = \begin{cases}
\phi_A(w)&  \mathrm{if}\ w \in V(A),\\
\alpha^\prime_B(w) &  \mathrm{if}\ w \in V(B).\\
\end{cases}
$$

By Property~\ref{propty:transitive}, there is an automorphism, $\sigma$, of $QR_7$ such that 
$\alpha^\prime_B(\sigma(b_1)) = \alpha_B(b_1)$ and 
$\alpha^\prime_B(\sigma(b_2)) = \alpha_B(b_2)$.
In particular, we can choose $\sigma: QR_7 \to QR_7$ to be defined by
$$\sigma(w) = \frac{\alpha^\prime_B(b_2)-\alpha^\prime_B(b_1)}{\alpha_B(b_2)-\alpha_B(b_1)}(w  - \alpha_B(b_1))  + \alpha^\prime_B(b_1) \pmod 7,$$
so that, in addition, $\sigma$ fixes no vertex of $QR_7$.	
	\begin{table}
		
		\begin{center}
			\includegraphics[scale=1]{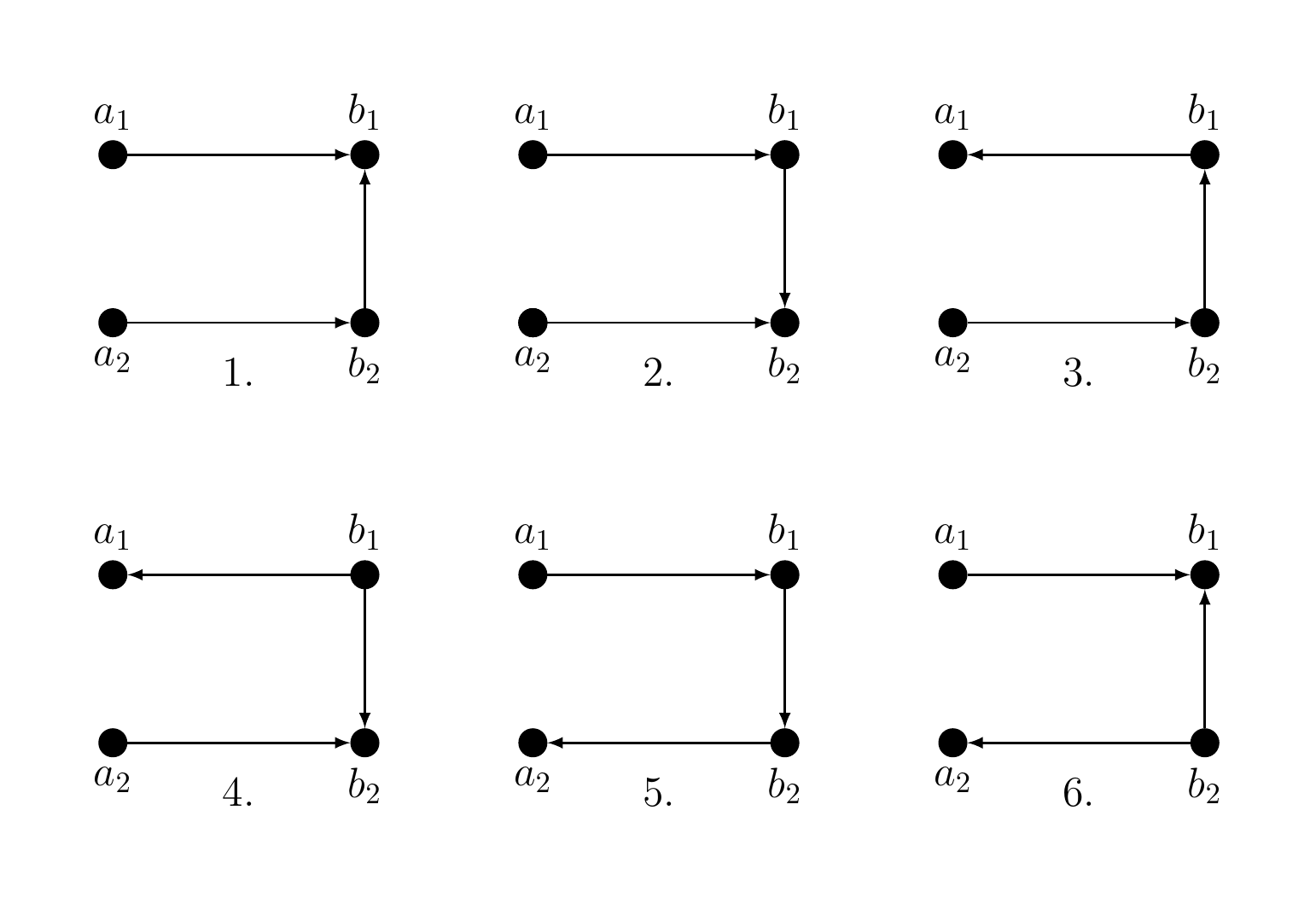}
		\end{center}
		\begin{center}
			\begin{tabular}{ccc}
				
				\begin{tabular}{c|c|c}
					$1.$ & $\alpha_B$ & $\alpha^\prime_B$ \\ 
					\hline $b_1$ & $4$ & $2$ \\ 
					\hline $b_2$ & $2$ & $5$ \\ 
				\end{tabular} 
				\vspace{0.5cm}

				&  
				\begin{tabular}{c|c|c}
					$2.$ & $\alpha_B$ & $\alpha^\prime_B$ \\ 
					\hline $b_1$ & $1$ & $2$ \\ 
					\hline $b_2$ & $2$ & $3$ \\ 
				\end{tabular}

				&
				\begin{tabular}{c|c|c}
					$3.$ & $\alpha_B$ & $\alpha^\prime_B$ \\ 
					\hline $b_1$ & $3$ & $6$ \\ 
					\hline $b_2$ & $2$ & $5$ \\ 
				\end{tabular}

				\\ 
				\begin{tabular}{c|c|c}
					$4.$ & $\alpha_B$ & $\alpha^\prime_B$ \\ 
					\hline $b_1$ & $5$ & $6$ \\ 
					\hline $b_2$ & $2$ & $3$ \\ 
				\end{tabular} 
				\vspace{0.5cm}
				&  
				\begin{tabular}{c|c|c}
					$5.$ & $\alpha_B$ & $\alpha^\prime_B$ \\ 
					\hline $b_1$ & $2$ & $4$ \\ 
					\hline $b_2$ & $4$ & $6$ \\ 
				\end{tabular}

				&
				
				\begin{tabular}{c|c|c}
					$6.$ & $\alpha_B$ & $\alpha^\prime_B$ \\ 
					\hline $b_1$ & $2$ & $1$ \\ 
					\hline $b_2$ & $0$ & $6$ \\ 
				\end{tabular}  \\ 
			\end{tabular} 
		\end{center}		
		\caption{
		}
		\label{claim:2edgecuttable}
	\end{table}

Notice that, with this choice of $\sigma$,  if $\phi(u) = \phi(v)$, then $\phi^\prime(u) \neq \phi ^\prime(v)$, contradicting that $u$ and $v$ have the same image in every homomorphism  $(G-z) \to QR_7$.
	
\emph{Subcase II.i.ii}  $\phi_A(a_1) = \phi_A(a_2)$ and $\phi_B(b_1) = \phi_B(b_2)$. \\
Suppose the arcs between $A$ and $B$ (i.e., $e_1$ and $e_2$) are all oriented from $A$ to $B$.
Observe that the oriented graph produced by identifying $a_1$ and $a_2$, and $b_1$ and $b_2$  has a cut arc, $e$, and
each component of $(G-z) - e$ admits a homomorphism to $QR_7$.   
By Property~\ref{propty:cutedge}, there is a homomorphism $\phi: (G-z) \to QR_7$, and since $u$ and $v$ are in different components of $(G-z) - \{e_1, e_2\}$, the mapping $\phi$ can be defined so that $\phi(u) \neq \phi(v)$, a contradiction.

Hence suppose, without loss of generality, that $e_1 = b_1a_1$ and $e_2 = a_2b_2$.
Construct $A^\star$ from $A$ by adding a vertex, $a$, together with the arcs $a_2a$ and $aa_1$, and similarly construct $B^\star$ from $B$.
If $A^\star \to QR_7$, then $a_1$ and $a_2$ have different images in any homomorphism $A^\star \to QR_7$ (as they are joined by a 2-dipath), and similarly for $B^\star$.
If both $A^\star \to QR_7$ and $B^\star \to QR_7$, then Subcase \emph{II.i.i} applies.
Therefore either $A^\star \not\to QR_7$ or $B^\star \not\to QR_7$.
Without loss of generality, $A^\star \not\to QR_7$.
Then, by minimality, either $A^\star$ contains an element $Z \in \mathcal{Z}$, or $A^\star$ is reducible to an oriented graph that contains an element of $\mathcal{Z}$ as a subgraph.

Suppose $A^\star$ contains an element $Z \in \mathcal{Z}$.
Then $a \in V(Z)$. 
Since $a$ has degree $2$ (in $U(A^\star)$), we can assume that it corresponds to $z_5$  (note that $z_5$ and $z_6$ are interchangeable, if $z_6$ exists).
But then $G-z$ has a subgraph that is in $\mathcal{R}$: it consists of $Z-a$ together with $b_1$ and $b_2$ (corresponding to $r_2$ and $r_1$, respectively).
Therefore $G- z$ is reducible,  a contradiction.

Now suppose $A^\star$ is reducible to an oriented graph that contains an element of $\mathcal{Z}$ as a subgraph.
Let $R \in \mathcal{R}$ be a subgraph of $A^\star$.
Since $G-z$ is not reducible, $a \in V(R)$.
Further, in $R$ the vertex $a$ plays the role of one of $r_1, r_2,$ or $z_6$ (as all other vertices have degree 3 in $U(R)$).
If $a$ plays the role of $r_1$ or $r_2$, then using $b_1$ or $b_2$ instead (respectively) gives a copy of $R$ in $G-z$, a contradiction.  Hence $a$ plays the role of $z_6$.
But now, by Lemma~\ref{reducetoZ}, $A^\star - a = A$ is reducible, a contradiction.

\emph{Subcase II.i.iii}  $\phi_A(a_1) = \phi_A(a_2)$ for all homomorphisms $\phi_A: A \to QR_7$, and  $\phi_B(b_1) \neq \phi_B(b_2)$ for all homomorphisms  $\phi_B$: $B \to QR_7$.\\
 Since $\phi_B(b_1) \neq \phi_B(b_2)$, we may assume the existence of an arc between $b_1$ and $b_2$ (with some orientation). If such an arc does not exist, then we may add it so that it has the same orientation as the arc between $\phi_B(b_1)$ and $\phi_B(b_2)$ in $QR_7$.  By identifying $a_1$ and $a_2$ into a single vertex and applying Property~\ref{propty:backpij} we obtain a homomorphism of $G - z$ to $QR_7$ in which $u$ and $v$ have different images,
 a contradiction.
	
	\emph{Subcase II.ii: $u$ and $v$ are in the same component of $(G-z)-\{e_1,e_2\}$}. \\
Suppose $u, v \in V(A)$.  By the minimality of $G$, observe that  $B$ admits a homomorphism to $QR_7$. Construct $A_z$ by adding the vertex $z$ together with the arcs $uz$ and $zv$ to $A$. By the minimality of $G$, $A_z$ admits a homomorphism to $QR_7$. Regardless of the orientations of the arcs between $A$ and $B$, these homomorphisms may be combined to be one of $G$ to $QR_7$  as above, as long as it is not the case that for all $\phi_{A_z}: A_z \to QR_7$ and $\phi_B$: $B \to QR_7$ we have that  $\phi_A(a_1) = \phi_A(a_2)$ and $\phi_B(b_1) = \phi_B(b_2)$, and that, without loss of generality, $a_1$ is the head of $e_1$ and $a_2$ is the tail of $e_2$. However in this case we proceed as in Subcase \emph{II.i.ii} when $e_1=b_1a_1$ and $e_2=a_2b_2$. We  construct $A^\star$ and $B^\star$ and conclude that either $G-z$ is reducible or contains a subgraph from $\mathcal{Z}$, a contradiction.
	
Therefore if $\{e_1,e_2\}$ is an edge cut in $G-z$, then  $e_1$ and $e_2$ have a common endpoint of degree $2$.
	
	\begin{claim} \label{claim:deg2claim}
		$G$ contains a single vertex of degree $2$.
	\end{claim}
Suppose there exists $z^\prime\neq z$ with neighbours $u^\prime$ and $v^\prime$.  
	
Suppose $z^\prime$ has in-degree 0 or out-degree 0.	
By minimality, $(G-z^\prime) \to QR_7$, and by Property~\ref{propty:pij} any such homomorphism can be extended 
to $G$, a contradiction.
	
Suppose, then, that $u^\prime z^\prime v^\prime$ is a $2$-dipath in $G$. 
Consider the oriented graph $G^\prime$ obtained by deleting $z^\prime$ 
and adding the arc $u^\prime v^\prime$. 
Any homomorphism $G^\prime \to QR_7$ can be extended to 
$G$ by Property~\ref{propty:pij}, a contradiction.
Hence $G^\prime \not\to QR_7$.
Therefore, by minimality of $G$, the oriented graph  $G^{\prime}$ is either a subgraph in $\mathcal{Z}$, or is reducible to an oriented graph that contains a subgraph in $\mathcal{Z}$.

Suppose $G^{\prime}$ contains a copy of $Z \in \mathcal{Z}$.
Certainly $u^\prime v^\prime$ appears in this copy of $Z$.
If this arc does not correspond to the one between $z_1$ and $z_2$ in $Z_1$ or $\overline{Z_1}$, then reversing the orientation of this arc, i.e., adding arc $v^\prime u^\prime$ and removing $v^\prime u^\prime$  does not yield a copy of any element of $\mathcal{Z}$.
Nor can it yield a reducible graph.
Thus, we have $Z = Z_1$ or $Z=\overline{Z_1}$ and the arc between $u^\prime$ and $v^\prime$ corresponds to the one between $z_1$ and $z_2$.
However, in this case we note that $G$ has a subgraph isomorphic to $Z_3$, a contradiction.

Now suppose $G^\prime$ contains a subgraph $R \in \mathcal{R}$ and is reducible to an oriented graph that contains a subgraph in $\mathcal{Z}$.  By definition of $G$, we have $u^\prime v^\prime \in E(R)$.  If $u^\prime$ or $v^\prime$ corresponds to $r_1$ or $r_2$, then $G$ contains a copy of $R$ with $z^\prime$ playing that role, a contradiction.  
Otherwise,  $(G^\prime)^R$ is either reduced, or is reducible to an oriented graph that contains a subgraph in $\mathcal{Z}$.  We consider these cases in turn.

In the former case, $G^\prime$ contains the configuration in Figure~\ref{orient:singlereduce}.    
Since $G$ does not contain this configuration and is reduced, it must be that the newly added arc corresponds to the arc between $x_2$ and $y_2$ or the arc between $y_1$ and $x_1$. 
However,  if $u^\prime v^\prime$ is replaced with the  2-dipath $u^\prime, z^\prime, v^\prime$, then it is easy to see that $G$ is reducible,  a contradiction.

In the latter case, if $r$ is the vertex of $(G^\prime)^R$ introduced in the reduction, then by Lemma~\ref{reducetoZ}, 
$(G^\prime)^R - r$ is reducible.  Since $(G^\prime)^R - r$ is a subgraph of $G$, it follows that $G$ is reducible,
a contradiction.

	\begin{claim} \label{claim:indep}
		$u_1$ and $u_2$ are not adjacent in $G - z$.
	\end{claim}
	If $u_1$ and $u_2$ are adjacent in $G- z$, then the arcs respectively incident with $u_1$ and $u_2$ that are not incident with $u$ form a $2$-edge cut. By Claim~\ref{claim:2edgecut}, these arcs have a common endpoint of degree $2$. Since $u$ and $v$ are the only vertices of degree $2$ in $G-z$ this common endpoint must be $v$. Therefore $G$ contains only the vertices $u,v,z,u_1,u_2$. 
	Since $G$ contains no copy of a graph from $\mathcal{Z}$, it must be that $G$  is a subdivision of a non-transitive tournament on four vertices.  
	By Property~\ref{propty:nonTrans} of $QR_7$, $G$ admits a homomorphism to $QR_7$, a contradiction.

	\begin{claim} \label{claim:sourceSink}
		Each of $u$ and $v$ is either a source or sink vertex in $G -z$.
	\end{claim}
	Suppose the contrary. 
	That is, suppose  $u_1uu_2$ forms a $2$-dipath in $G -z$. Consider the graph, $H_u$, constructed from $G -z$ by removing $u$ and adding the arc $u_2u_1$. 
	If $H_u$ admits a homomorphism to $QR_7$,  then $G - z- u$ admits a homomorphism to $QR_7$, as $G - z- u$ is a subgraph of $H_u$.
	By part $(4)$ of Property~\ref{propty:neighbourhood} of $QR_7$,  such a homomorphism can be extended in two different ways to include $u$.
	In particular, it can be extended so that $u$ and $v$ have different images, a contradiction.
	Thus it must be that $H_u$ does not admit a homomorphism to $QR_7$.
	Therefore we may assume that $H_u$ contains either a subgraph from $\mathcal{Z}$ or it  contains a subgraph from $\mathcal{R}$ and is reducible to an oriented graph that contains a subgraph from $\mathcal{Z}$.
	We will consider these cases in turn. In each case we will derive a contradiction by constructing a homomorphism in which $u$ and $v$ do not have the same image.
	
	\emph{Case I: $H_u$ contains a subgraph  $Z \in \mathcal{Z}$.}\\
	To construct a homomorphism of $G-z$ to $QR_7$ in which $u$ and $v$ have different images we first show that $Z= Z_3$ and that the vertices corresponding to $z_5$ and $z_6$ in $Z$ each have degree $3$ in $G-z$. 
	
	We begin by showing that $Z \neq Z_1,Z_2, \overline{Z_1}$.
	Observe that $v$ is not contained in $H_u$.
	If $z_6$ does not exist, then the vertex corresponding to $z_5$ in $H_u$ in much have a third neighbour in $G-z$ that is not contained in $Z$.
	Otherwise there is no path from $u$ to $v$ in $G-z$, as the only vertices reachable from $u$ would be those appearing in the copy of $Z$. 
	However in this case we observe that an arc incident in $z_5$ in $G-z$ is a cut arc.
	This contradicts Claim \ref{claim:cutclaim}.
 	Therefore $z_6$ exists, and as such $Z = Z_3$.

	We now show that vertices corresponding to $z_5$ and $z_6$  each have degree $3$ in $G-z$.
	If the vertices corresponding respectively to $z_5$ and $z_6$ have degree $2$ in $G-z$, then $z_5$ and $z_6$  must correspond, in some order, to $u$ and $v$, as these are the only vertices of degree $2$ in $G-z$. 
	However both $z_5$ and $z_6$  are contained in $H_u$. 
	This is contradiction, as $v$ is not contained in $H_u$.
	
	If the vertex corresponding to $z_5$ has degree $2$ in $G-z$ and the vertex corresponding to $z_6$ has degree $3$, then the arc incident with the vertex corresponding to $z_6$ that is not contained in $Z$ is a cut arc in $G-z$, contradicting Claim~\ref{claim:cutclaim}.
	A similar argument applies when the vertex corresponding to $z_5$ has degree $3$ and the vertex corresponding to $z_6$ has degree $2$ in $G-z$.
	Therefore vertices corresponding respectively to $z_5$ and $z_6$ each have degree $3$ in $G-z$.

	We now derive a contradiction by constructing a homomorphism of $G-z$ to $QR_7$ where $u$ and $v$ have different images. 
	Consider the arcs incident with the vertices corresponding to $z_5$ and $z_6$.
	Each of $z_5$ and $z_6$ are incident with an arc not contained in $Z$.
	These arcs form a $2$-edge cut in $G-z$.
	By Claim~\ref{claim:2edgecut}, these arcs have a common endpoint of degree $2$.
	This common endpoint must be $v$, as, by Claim~\ref{claim:deg2claim}, $G-z$ contains no other vertices of degree $2$.
	Since $H_u$ contains $Z$, but $G-z$ does not contain $Z$, the arc $u_2u_1$ corresponds  to an arc in some graph from $\mathcal{Z}$ that contains $z_6$.
	This implies $G-z$ is one of the partially oriented graphs in Figure~\ref{orient:not2dp}, or one obtained by reversing each of the oriented edges.

	In each of these oriented graphs we see a partial $QR_7$-colouring.
	In all of these graphs other than the first, this colouring can be extended so that $u$ and $v$ are assigned different colours regardless of the orientations of the arcs incident with $v$.
	To see this, note that $u$ can receive colour $1$.
	Regardless of the orientation of the arcs incident with $v$, $v$ cannot be coloured with $1$, as one of its neighbours has this colour.
	Thus, using Property~\ref{propty:pij} of $QR_7$, these colourings can be extended so that $u$ and $v$ receive different colours.
	
			\begin{figure}
		\begin{center}
			\begin{tabular}{cccc}
				\includegraphics[width = 0.25\linewidth]{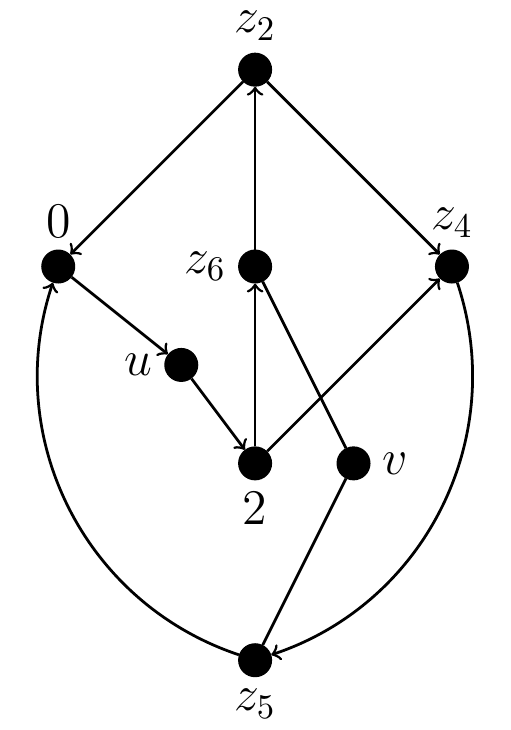} &
				\includegraphics[width =0.25\linewidth]{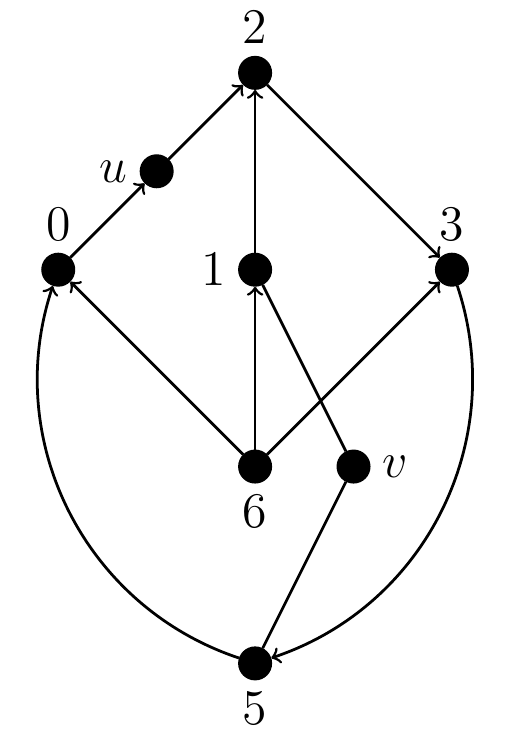} &
				\includegraphics[width = 0.25\linewidth]{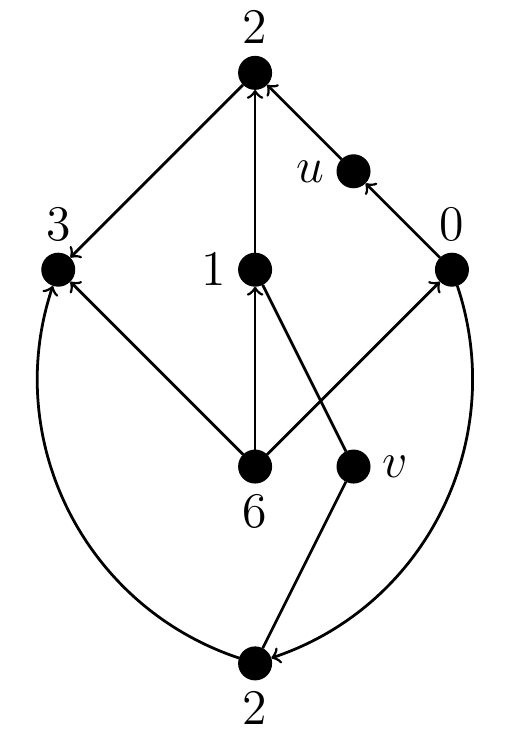}  &
				\includegraphics[width =0.25\linewidth]{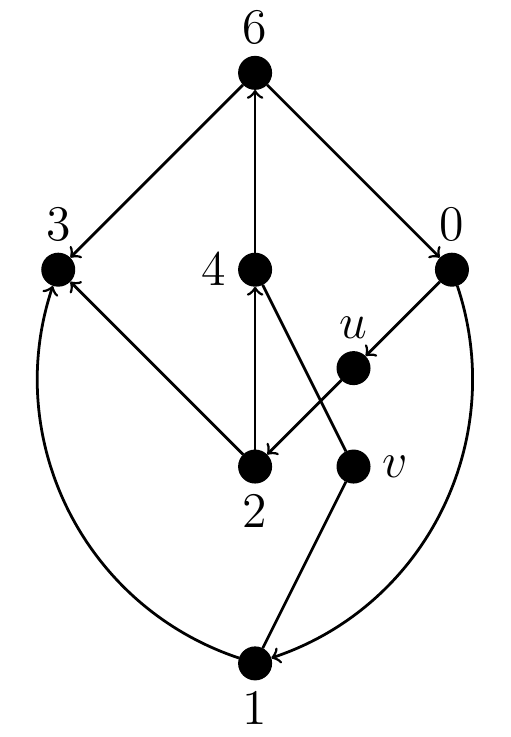} 
				
			\end{tabular}
			\begin{tabular}{cccc}
				\includegraphics[width  =0.25\linewidth]{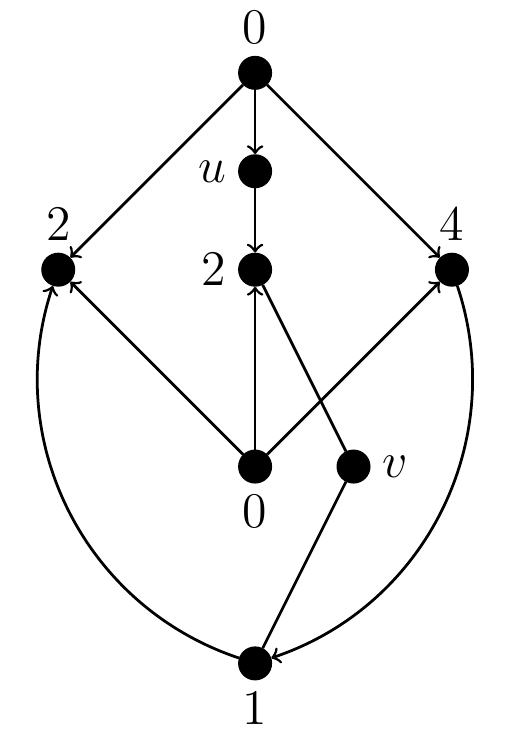} &
				\includegraphics[width =0.25\linewidth]{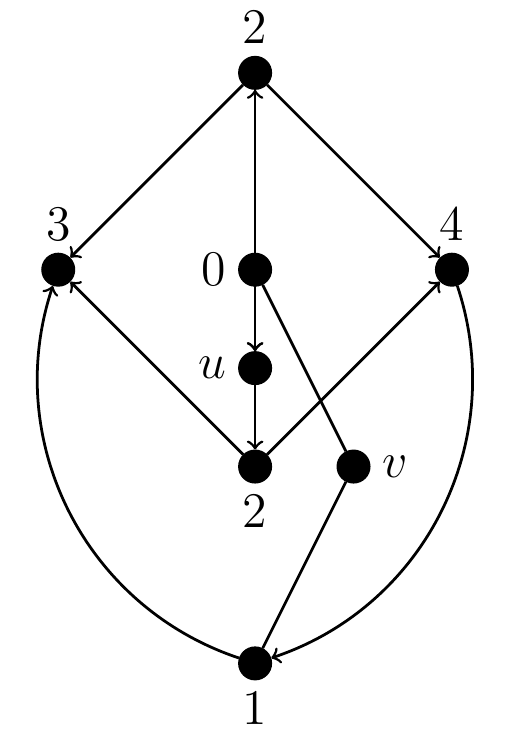} & 	
				\includegraphics[width =0.25\linewidth]{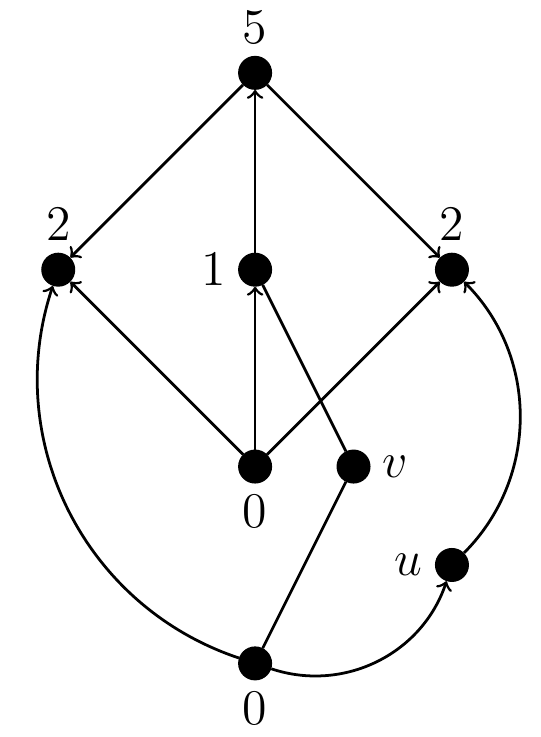}  &
				\includegraphics[width =0.25\linewidth]{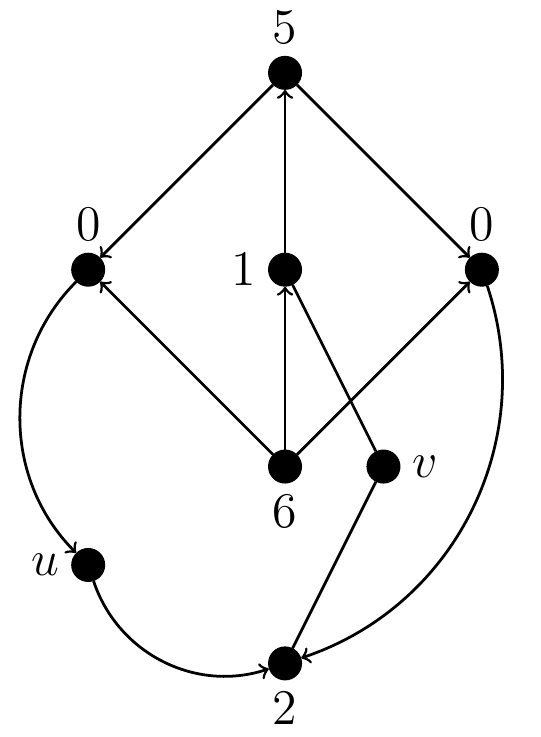} 
						
			\end{tabular}
		\end{center}
		\caption{Colourings for $G-z$ in Claim~\ref{claim:sourceSink} Case I when $H_u$ contains a copy of $Z_3$.}
		\label{orient:not2dp}
	\end{figure}

	As such it must be that the arc $u_2u_1$ corresponds to the one between $z_3$ and $z_1$ as shown in the first partially oriented graph in Figure~\ref{orient:not2dp}.
	We extend these colourings as shown in Figure~\ref{orient:not2dpAA}.
	Therefore if $H_u$ contains a subgraph  $Z \in \mathcal{Z}$, then there exists a homomorphism of $G-z$ to $QR_7$  in which $u$ and $v$ have different images, a contradiction. 
	This concludes \emph{Case I}.

\begin{figure}
	\begin{center}
		\begin{tabular}{cccc}
		\includegraphics[width = 0.25\linewidth]{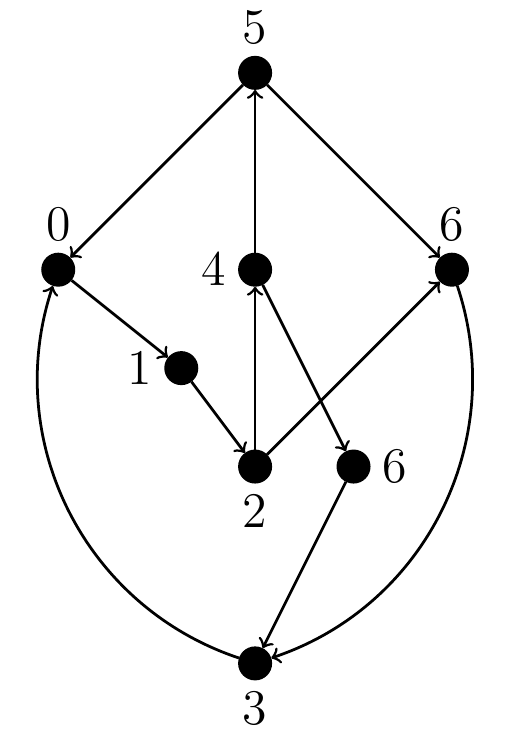} &
		\includegraphics[width = 0.25\linewidth]{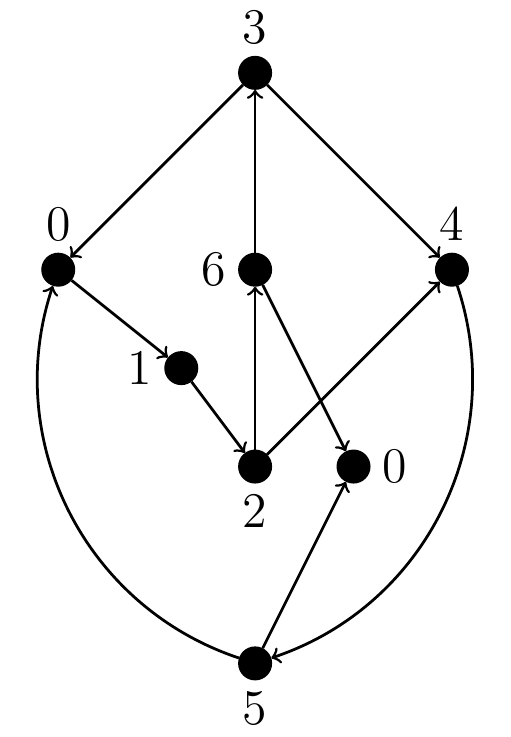}  &
			\includegraphics[width = 0.25\linewidth]{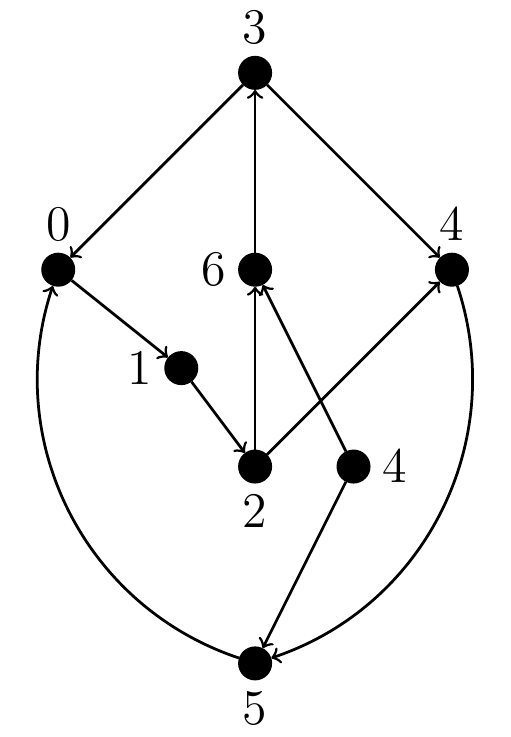} &
		\includegraphics[width = 0.25\linewidth]{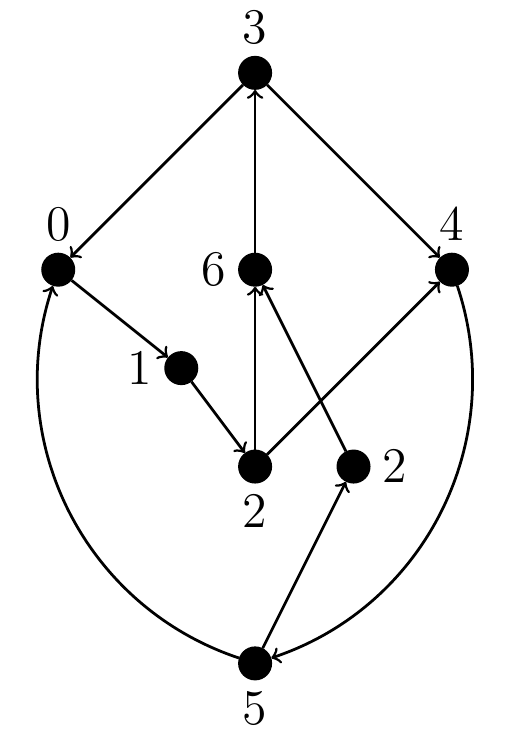}  
		\end{tabular}		
	\end{center}
			\caption{Colourings for $G-z$ in Claim~\ref{claim:sourceSink} Case I when $H_u$ contains a copy of $Z_3$ and $u_1u_2$ corresponds to $z_3z_1$.}
			\label{orient:not2dpAA}
\end{figure}

	\emph{Case II: $H_u$ contains a subgraph  $R \in \mathcal{R}$ and reduces to an oriented graph containing an element of $\mathcal{Z}$.}\\
	We derive a contradiction by constructing a homomorphism of $G-z$ to $QR_7$ in which $u$ and $v$ have different images.
	To do so we first show that $H_u$ contains a graph as in Figure~\ref{orient:singlereduce}.

	Consider $H_u^R$, the graph produced by reducing $H_u$.
	If $H_u^R$ contains a graph from $\mathcal{Z}$, then $H_u$ contains a graph as in Figure~\ref{orient:singlereduce}, as $H_u$ reduces to contain a graph from $\mathcal{Z}$ with a single reduction.
	If $H_u^R$ does not contain a graph from $\mathcal{Z}$, then it eventually reduces to a graph that contains some $Z \in \mathcal{Z}$.
	That is, $H_u^R$  contains a subgraph $S \in \mathcal{R}$.
	It cannot be that $S$ is contained in $H_u$ as otherwise it would be contained in $G-z$.
	Since this copy of $S$ is not contained in $H_u$, it must be that $S$ contains $r$, the vertex produced when reducing $H_u$.
	By Lemma~\ref{reducetoZ}, $H_u^R - r$ is reducible.	
	No vertex of $R$ is contained in $H_u^R$.
	Therefore every vertex and arc of  $S$ is contained in $G-z$,
 contradicting that $G-z$ is not reducible.
	Therefore $H_u$ contains a subgraph as in Figure~\ref{orient:singlereduce}, as required.

	Since the subgraph in Figure~\ref{orient:singlereduce} is not contained in $G-z$ but is contained in $H_u$, the arc $u_2u_1$ must appear in this subgraph.
	We now consider  to which arc $u_2u_1$ corresponds in the copy of the graph from Figure~\ref{orient:singlereduce}.
	If $u_1u_2$ corresponds to any arc other than the one between $x_2$ and $y_2$ or the one between $x_1$ and $y_1$, then  $G-z$ is reducible.
	Therefore  $u_2u_1$ corresponds to, without loss of generality, the arc between $x_2$ and $y_2$. 
	Observe the arcs incident with $x_5$ and $y_5$ that do not have their other ends at one of $x_3$, $x_4$, $y_3$ or $y_4$ in $G-z$   form a $2$-edge cut. 
	By Claim~\ref{claim:2edgecut}, these arcs must have a common endpoint of degree $2$.
	Since $G-z$ has only two vertices of degree $2$, this common endpoint must be $v$, as $u$ is the centre vertex on a $2$-dipath from $y_2$ to $x_2$.
	
	Since this graph must reduce to one that contains a copy of a graph from $\mathcal{Z}$, we may assume that neither $u_1$ nor $u_2$ are the centre of a $2$-dipath in $G -\{z,u\}$. 
	As otherwise, there is a reduction that does not give a copy of a graph from $\mathcal{Z}$.
	Therefore $G-z$ can be constructed from one of the four possible partial orientations given in Figure~\ref{orient:4orientations}. 
	
	In each of these cases, a homomorphism $\phi$ so that $\phi(u) \neq \phi(v)$ can be constructed as follows:
	\begin{itemize}
		\item $\phi(u_1) = \phi(y_1) = 0$;
		\item $\phi(u_2) = \phi(x_1) = 1$;
		\item $\phi(u) = \phi(v_2) =  4$;
		\item $\phi(v_1) = 6$.
	\end{itemize}

To complete the construction of the homomorphism in each case, we  define images for $y_3,y_4,x_3,x_4$ and $v$.
We do so as follows.
\begin{itemize}
\item If $y_4$ is an out-neighbour of $y_1$,  then let $\phi(y_3) = 1$ and $\phi(y_4) = 2$.
Otherwise, let  $\phi(y_3) = 5$ and $\phi(y_4) = 3$.
\item If $x_4$ is an out-neighbour of $x_1$, let $\phi(x_3) = 3$ and $\phi(x_4) = 5$.
Otherwise, let  $\phi(x_3) = 0$ and $\phi(x_4) = 4$.
\end{itemize}
Finally, to find an image of $v$ such that $\phi(v) \neq \phi(u)$ we apply Property~\ref{propty:pij} of $QR_7$.
Since $\phi(v_2) = 4$ it cannot be that $\phi(v) = \phi(u) = 4$.
		\begin{figure}
		\begin{center}
			\begin{tabular}{cc}
				\includegraphics{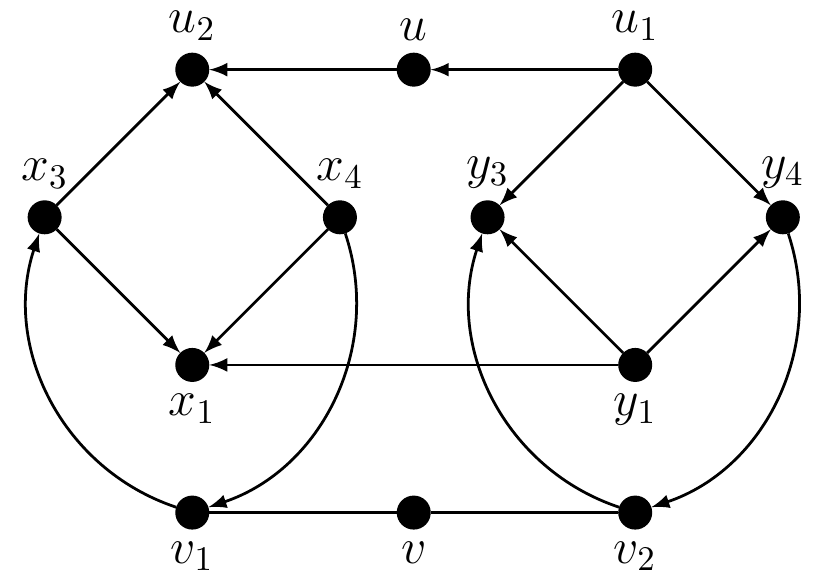} & \includegraphics{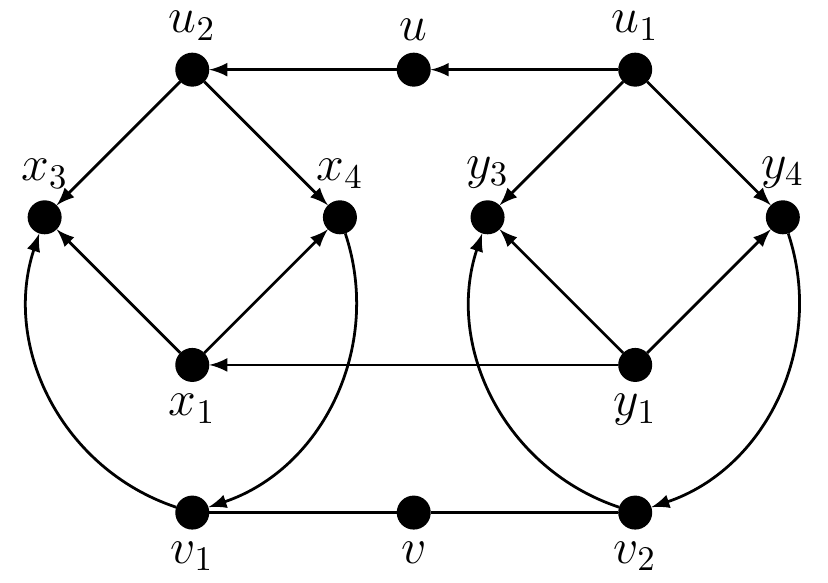} \\
				\\
				\includegraphics{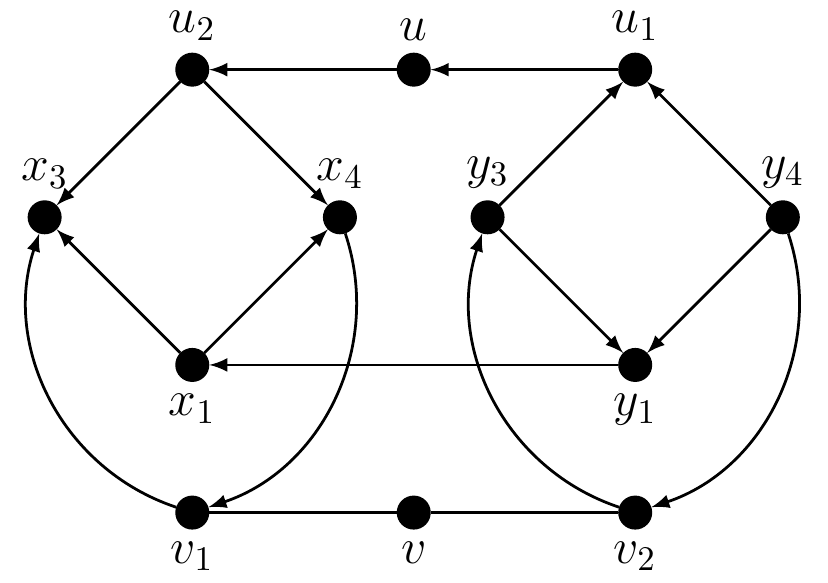} & \includegraphics{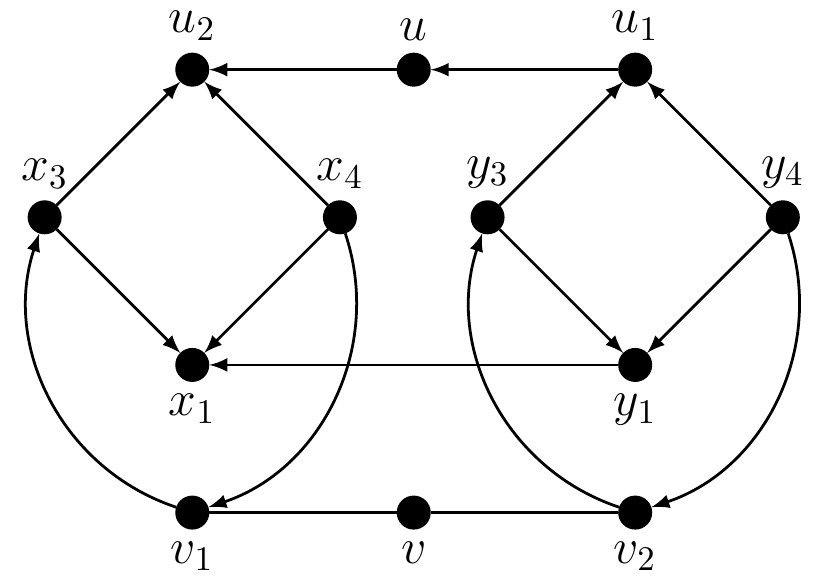} \\
			\end{tabular}
		\end{center}
		\caption{Possibilities for $G-z$ in Claim~\ref{claim:sourceSink} when $H_u$ is reducible. }
		\label{orient:4orientations}
	\end{figure}

	Therefore if $H_u$ contains a subgraph  $R \in \mathcal{R}$ and reduces to an oriented graph containing an element of $\mathcal{Z}$, then there exists a homomorphism of $G-z$ to $QR_7$  in which $u$ and $v$ have different images, a contradiction. 
	This concludes \emph{Case II}.
	
	Therefore each of $u$ and $v$ is either a source or sink vertex in $G -z$, as required.

		\begin{claim} \label{claim:3commclaim}
			$u_1$ and $u_2$ do not have three common neighbours in $G - z$.
		\end{claim}
	
	Suppose that $u_1$ and $u_2$ have three common neighbours: $u,x_1,x_2$. 
	Observe that $x_1$ and $x_2$ are not adjacent, as otherwise $uz$ is a cut arc in $G$.
	The arcs incident with $x_1$ and $x_2$ that are not incident with $u_1$ and $u_2$ have a common endpoint of degree $2$, by Claim~\ref{claim:2edgecut}. Since $G-z$ has no cut edge it must be that this common endpoint is $v$ by Claim~\ref{claim:deg2claim}. Therefore $x_1 = v_1$ and $x_2 = v_2$.
	By Claim \ref{claim:sourceSink} we may assume without loss of generality that $u$ is a source. We proceed in two cases and construct a homomorphism of $G-z$ to $QR_7$ where $u$ and $v$ are assigned different colours.
	
	\emph{Case I: $v$ is a source.}
	
	We construct $\phi: G-z \to QR_7$ as follows. Let $\phi(u) = 0$.
	
	If $u_1$ and $u_2$ are not the ends of a $2$-dipath then let $\phi(u_1) = \phi(u_2) = 1$.
	In this case, $v_1$ is either a common out-neighbour or a common in-neigbour of $u_1$ and $u_2$. 
	If $v_1$ is a common out-neighbour then let $\phi(v_1) = 5$, otherwise let $\phi(v_1) = 6$.
	This partial homomorphism can be extended to $v_2$ and $v$ using Property \ref{propty:pij} of $QR_7$.
	Observe that in every such extension we have $\phi(v) \neq \phi(u)$, as $0$ is not an in-neighbour of either $5$ or $6$, the possible images of $v_1$. 
 	
 	If $u_1$ and $u_2$ are the ends of a $2$-dipath, assume, without loss of generality, that $u_1v_1u_2$ is a $2$-dipath.
 	In this case, let $\phi(u_2)=1$ and $\phi(v_1) = 5$.
 	This partial homomorphism can be extended to $u_1,v_2$ and $v$ using Property \ref{propty:pij} of $QR_7$.
 	Observe that in every such extension we have $\phi(v) \neq \phi(u)$, as $0$ is not an in-neighbour of $5$, the image of $v_1$. 
	
	\emph{Case II: $v$ is a sink.}
	Let $\phi(u) = 0$.
	If $u_1$ and $u_2$ are not the ends of a $2$-dipath then let $\phi(u_1) = \phi(u_2) = 1$.
	In this case, $v_1$ is either a common out-neighbour or a common in-neigbour of $u_1$ and $u_2$. 
	If $v_1$ is a common out-neighbour, then let $\phi(v_1) = 2$, otherwise let $\phi(v_1) = 4$.
	This partial homomorphism can be extended to $v_2$ and $v$ using Property \ref{propty:pij} of $QR_7$.
	Observe that in every such extension we have $\phi(v) \neq \phi(u)$, as $0$ is not an out-neighbour or either $1$ or $4$, the possible images of $v_1$. 
	
	If $u_1$ and $u_2$ are the ends of a $2$-dipath, assume, without loss of generality, that $u_1v_1u_2$ is a $2$-dipath.
	In this case, let $\phi(u_2)=1$ and $\phi(v_1) = 2$.
	This partial homomorphism can be extended to $u_1,v_2$ and $v$ using Property \ref{propty:pij} of $QR_7$.
	Observe that in every such extension we have $\phi(v) \neq \phi(u)$, as $0$ is not an out-neighbour of  $2$, the image of $v_1$.

	\begin{claim}  \label{claim:notTwo}
		$|\{u_1,u_2,v_1,v_2\}| \neq 2$.
	\end{claim}
	Suppose the contrary.
	By Claim~\ref{claim:indep}, $u_1$ and $u_2$	are not adjacent.
	By Claim~\ref{claim:deg2claim} each of $u_1$ and $u_2$ has degree $3$ in $G$.
	Therefore the arcs incident with $u_1$ and $u_2$ that  do not have  $u$ or $v$ as an endpoint form a  $2$-edge cut in $G$.
	By Claim~\ref{claim:cutclaim}, these arcs have a common endpoint.
	However, this implies that $u_1$ and $u_2$ have three common neighbours, contradicting Claim~\ref{claim:3commclaim}.
	Therefore $|\{u_1,u_2,v_1,v_2\}| \neq 2$.
	
	\begin{claim} \label{claim:notThree}
		$|\{u_1,u_2,v_1,v_2\}| \neq 3$.
	\end{claim}
	
	Suppose the contrary. 
	Assume, without loss of generality, that $u_1 = v_1$. 
	Since $u$ and $v$ have the same image in any homomorphism of $G - z$ to $QR_7$, it must be that the arc between $u_1$ and $u$ and the arc between $u_1$ and $v$  have the same orientation with respect to $u_1$. That is, these arcs either both have their head at $u_1$ or both have their tail at $u_1$. Otherwise, $u$ and $v$ would be at the ends of a $2$-dipath.

	Consider the properly subcubic graph $A_u$ constructed by removing $z$ and $v$ and adding an arc between $u$ and $v_2$ and orienting it so that $\{u_1,u,v_2\}$ induces a $2$-dipath.
	If $A_u$ admits a homomorphism to $QR_7$, then observe that it may be extended to include $v$ by applying  Property~\ref{propty:pij} of $QR_7$. 
	In such an extension, $u$ and $v$ have different images as $uv_2v$ form a $2$-dipath in the graph constructed by adding $v$ to $A_u$. 
	The existence of such a homomorphism is a violation of the assumption that $G$ does not admit a homomorphism to $QR_7$. 
	Therefore it must be that $A_u$ either contains a copy of a graph from $\mathcal{Z}$ or eventually reduces to a graph that contains some $Z \in \mathcal{Z}$.
	In each case we derive a contradiction by constructing a homomorphism of $G-z$ to $QR_7$ so that $u$ and $v$ have different images.
	Observe that $u_1$ has degree $2$ in $A_u$.

	\emph{Case I: $A_u$  contains a subgraph  $Z \in \mathcal{Z}$.}\\
	Assume $uu_1$ is an arc.
	By Claim \ref{claim:sourceSink}, we have  $uu_2$, $vu_1$ and $vv_2$ are arcs.
	Suppose   $A_u$  contains a subgraph  $Z \in \mathcal{Z}$. Since adding the arc between $v_2$ and $u$ created this $Z$, it must be that this arc appears in the copy of $Z$. 
	If $u_1=v_1$ appears in this copy of $Z$, then it must correspond to a vertex of degree $2$ in $Z$, as $u_1$ has degree $2$ in $A_u$.
	As such, $u$ corresponds to $z_4$.
	This implies that $u_2$ and $v_2$ correspond to $z_1$ and $z_2$ in some order.
	Since $v$ does not appear in $H_u$, if $Z = Z_3$, then $z_6$ has degree three in $G-z$.
	However, if this is the case we notice that an arc incident with $z_6$ is a cut arc in $G-z$.
	Therefore $G - z$ is the oriented graph produced by oriented the graph in Figure~\ref{orient:claim6II}.
	A homomorphism of this graph to $QR_7$ can be constructing by letting $\phi(v_2) = \phi(u) = 0$, $\phi(u_1) = 1$, $\phi(v) = 6$ and finding images for the remaining vertices using Property \ref{propty:pij} of $QR_7$.
	 	\begin{figure}
	\begin{center}
		\includegraphics{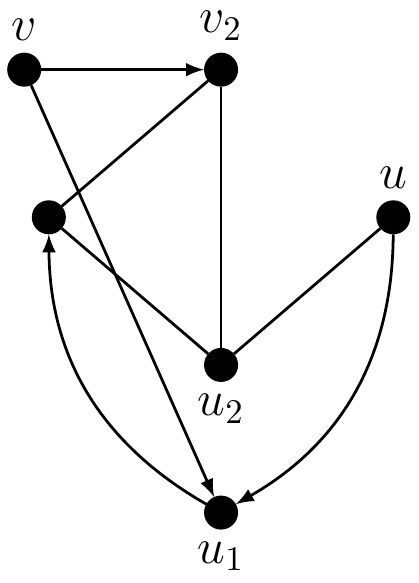} \quad \quad \includegraphics{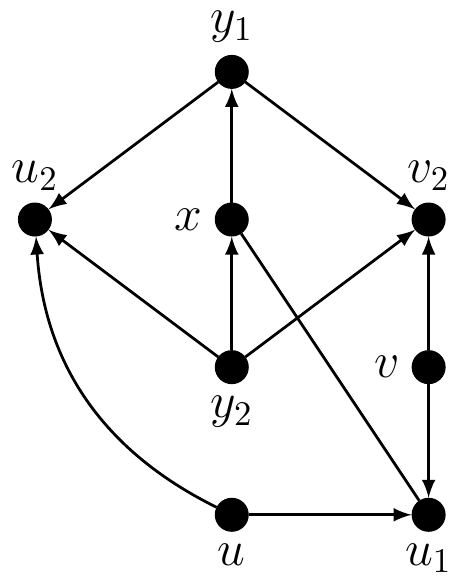}
	\end{center}
	\caption{{$G-z$ in Case \emph{I} of Claim $8$ when $u_1$ does and does not appear respectively  in $Z$. }}
	\label{orient:claim6II}
\end{figure}

	The existence of such a homomorphism is a contradiction, therefore $u_1$ does not correspond to any vertex in $Z$.
	Therefore $u$ corresponds to vertex of degree $2$ in $Z$.
	Without loss of generality, we may assume $u$ corresponds to $z_5$.
	Since the arc between $u$ and $v_2$ is contained in $A_u$, it must be that $v_2$ is contained in $A_u$.
	Therefore $v_2$ corresponds to $z_4$ and $u_2$ corresponds to $z_3$.
	
	If $z_6$ does not exist we see that the arc incident with $u_1$ that does not have its endpoint at either $u$ or $v$ is a cut arc in $G-z$, a contradiction of Claim~\ref{claim:cutclaim}.
	Thus Z = $Z_3$.
	By hypothesis, $z_6$ does not correspond to $u_1$ in $H_u$.
	If $u_1$ and the vertex corresponding to $z_6$, say $x$ are not adjacent, then the arc incident with $x$ that is not contained in $Z$ and the arc incident with $u_1$ that does not have its endpoint at either $u$ or $v$ form a $2$-edge cut in $G-z$.
	By Claim~\ref{claim:2edgecut}, these two arcs have a common endpoint of degree $2$ in $G-z$.
	The existence of such a vertex is a violation of Claim~\ref{claim:deg2claim}.
	Therefore $u_1$ and $z_6$ are adjacent.
	
	The graph $G-z$ is configured as in Figure \ref{orient:claim6II}.
	We find a homomorphism to $QR_7$ as follows: $\phi(u_1)=\phi(u_2) = \phi(v_1) = 0$;  $\phi(y_1) = 5$, $\phi(y_2) = 6$.
	If the arc between $x$ and $u_1$ has its head at $u$, let $\phi(x) = 3$.
	Otherwise let $\phi(x) = 1$.
	
	This homomorphism can be extended to so that $\phi(u) \neq \phi(v)$
	The existence of this homomorphism contradicts that $A_u$ contains a copy of a graph from $\mathcal{Z}$.
	This completes Case I.

	\emph{Case II: $A_u$ contains a copy of a graph $R \in \mathcal{R}$ and reduces to a graph that contains an element of $\mathcal{Z}$} as a subgraph.\\
	Suppose that $A_u$ contains a copy of a graph $R \in \mathcal{R}$. 
	We claim $A_u$ contains a copy of the graph in Figure~\ref{orient:singlereduce}.
	Consider $A_u^R$, the graph produced by reducing $A_u$.
	If $A_u^R$ contains a graph from $\mathcal{Z}$, then $A_u$ contains a graph as in Figure~\ref{orient:singlereduce}, as $A_u$ reduces to contain a graph from $\mathcal{Z}$ with a single reduction.
	If $A_u^R$ does not contain a graph from $\mathcal{Z}$, then it is reducible.
	That is, $A_u^R$  contains a subgraph $S \in \mathcal{R}$.
	Since this copy of $S$ is not contained in $A_u$, it must be that $S$ contains $r$, the vertex produced when reducing $A_u$.
	By Lemma~\ref{reducetoZ}, $A_u^R - r$ is reducible.	
	No vertex of $R$ is contained in $A_u^R$.
	Therefore every vertex and arc of  $S$ is contained in $G-z$,  contradiction as $G-z$ is not reducible.
	Therefore $A_u$ contains a subgraph as in Figure~\ref{orient:singlereduce}. 
	
	Since $G-z$ is reduced it must be that $u$ corresponds to either $x_1, x_2,y_1$ or $y_2$, as otherwise $G-z$ would be reducible. 
	Without loss of generality, assume $u$ corresponds $x_1$.
	If $u_1$ and $u_2$ correspond to $x_3$ and $x_4$, then  $u_1$ and $u_2$ have three common neighbours, a contradiction of Claim~\ref{claim:3commclaim}. 
	Without loss of generality, assume $u_1$ corresponds to $x_3$ and $u_2$ corresponds to $y_1$.
	Therefore $v_2$ corresponds to $x_4$.
	However, here we see that $G-z$ is reducible, a contradiction.
	This implies that $A_u$ does not reduce to a graph that contain an element of $\mathcal{Z}$ as a subgraph, a contradiction.

	By \emph{Case I} and \emph{Case II}, we have  $|\{u_1,u_2,v_1,v_2\}| \neq 3$.

	\begin{claim} \label{claim:notFour}
		$|\{u_1,u_2,v_1,v_2\}| \neq 4$.
	\end{claim}
	
	Suppose $|\{u_1,u_2,v_1,v_2\}| = 4$. Let $A_{v_1}$ be the oriented graph constructed from $G$ by removing $z$ and $v$ and adding the edge between $v_1$ and $u$, and then orienting it so that this arc has its head at $v_1$ if and only if the arc between $v_1$ and $v$ has its tail at $v_1$.
	
	If $A_{v_1}$ admits a homomorphism to $QR_7$, then by Property~\ref{propty:pij} of $QR_7$ and Claim~\ref{claim:sourceSink} we can extend this homomorphism to include $v$. 
	However in this case it cannot be that $u$ and $v$ have the same image; there is a $2$-dipath between them. 
	Therefore $A_{v_1}$ does not admit a homomorphism to $QR_7$. As such it either contains a copy of an oriented graph in $\mathcal{Z}$ or eventually reduces to a graph that contains some $Z \in \mathcal{Z}$. 
	Similarly we construct $A_{v_2}$ and assert that it contains a copy of an oriented graph in $\mathcal{Z}$ or eventually reduces to a graph that contains some $Z \in \mathcal{Z}$. 
	
	We claim that neither of $A_{v_2}$  and $A_{v_1}$ contain a graph from $\mathcal{R}$. Suppose  $A_{v_1}$ contains a graph $R \in \mathcal{R}$.  
	Using the argument in Case II in the proof of Claim~\ref{claim:notThree}, we claim $A_{v_1}$ contains a copy of the graph in Figure~\ref{orient:singlereduce}.
 	Since $G$ is not reducible it must be that $u$ corresponds to either $x_1,x_2,y_1$ or $y_2$.
 	Without loss of generality,  assume $u$ corresponds to $x_2$ and $v_1$ corresponds to $y_2$. 
 	However, if this is true, $u_1$ and $u_2$ have three common neighbours in $G- z$, contradicting Claim~\ref{claim:3commclaim}.
	
	Therefore each of $A_{v_2}$  and $A_{v_1}$ contains a graph from $\mathcal{Z}$. 
	Assume that $A_{v_1}$ contains  $Z \in \mathcal{Z}$ and  $A_{v_2}$ contains  $Z^\prime \in \mathcal{Z}$. 
	Observe that since $G-z$ contains no graph from $\mathcal{Z}$ it must be that the arc between $u$ and $v_1$ (respectively $v_2$) is contained in $Z$ (respectively $Z^\prime$).
	We first show that each of $u_1$ and $u_2$ has at least two common neighbours with one of $v_1$ and $v_2$ in $G$. 
	We do this by considering the degree of the vertex to which $u$ corresponds  in $Z$. 
	
	If $u$ corresponds to a vertex of degree $2$ in $Z$ it must be that  $v_1$ corresponds to a vertex of degree $3$, as there is no pair of adjacent degree $2$ vertices in any member of $\mathcal{Z}$.
	If $u$ corresponds to $z_5$, then one of $u_1$ and $u_2$ corresponds to $z_4$.  Therefore at least one of $u_1$ and $u_2$ has two common neighbours with $v_1$  in $G$.
	
	If $u$ corresponds to a vertex of degree $3$ in $Z$, then it cannot be that $v_1$ corresponds to a vertex of degree $2$, as otherwise $u_1$ and $u_2$ would have three common neighbours. 
	Therefore, if $u$ corresponds to $z_1$, then we may assume, without loss of generality, that $v_1$ corresponds to $z_3$.
	In this case, either  $u_1$ or $u_2$ corresponds to $z_4$, which has two common neighbours with $v_1$ in $G$ .

	By considering a similar argument for $A_{v_2}$ we see that at least one of $u_1$ and $u_2$ has two common neighbours with $v_2$  in $G$.
	However, since $v_1$ and $v_2$ have at most two common neighbours and since $u_1$ is not adjacent to $v$ (one of their common neighbours), we conclude that $u_1$ does not have two common neighbours with both $v_1$ and $v_2$.
	Therefore each of $u_1$ and $u_2$ has at least two common neighbours with one of $v_1$ and $v_2$ in $G$. 
	Without loss of generality, assume  $u_1$ has two common neighbours with $v_1$ and $u_2$ and has two common neighbours with $v_2$. 
	Let $x_1$ and $x_2$ be the common neighbours of $u_1$ and $v_1$.
	And let  $x_1^\prime$ and $x_2^\prime$ be the common neighbours of $u_2$ and $u_1$.

	We claim the vertices of $G-z$ are configured as in one of the two partially oriented graphs in Figure~\ref{claim:notFourfig} where the unoriented edges are oriented so that each of $A_{v_2}$ and $A_{v_2}$ contains an element from $\mathcal{Z}$.
	By the previous argument, $u_1$ and $v_1$ (respectively $u_2$ and $v_2$) have a pair of common neighbours. Since $A_{v_1}$ (respectively $A_{v_2}$) contains a copy of  $ Z \in \mathcal{Z}$ (respectively $Z^\prime \in \mathcal{Z}$), these common neighbours must either be adjacent or are the ends of a $2$-dipath. 
	The former case corresponds to $z_6$ (respectively $z_6^\prime$) not existing in $Z$ (respectively $Z^\prime$), whereas the latter case corresponds to $z_6$ (respectively$z_6^\prime$) existing in $Z$ (respectively $Z^\prime$). It remains to show that both $z_6$ and $z_6^\prime$ are present or both not present in $Z$ and $Z^\prime$ respectively, as shown in Figure~\ref{claim:notFourfig}. 
	
	Without loss of generality, we may assume $uu_1$ is an arc.
	By Claim~\ref{claim:sourceSink}, we have that $uu_2$ is an arc.
	Since $G-z$ is not reducible, it must be that $vv_1$ is an arc.
	By Claim~\ref{claim:sourceSink}, we have that $vv_2$ is an arc.
	
	If  $Z$ and $Z^\prime$ both contain a vertex corresponding to $z_6$, these vertices must have degree $3$ by Claim~\ref{claim:deg2claim}. Let $y$ and $y^\prime$ be these vertices, as shown in Figure~\ref{claim:notFourfig}.
	If these vertices are not adjacent, then, the arcs incident with $y$ and $y^\prime$ that do not appear in $Z$ and $Z^\prime$, respectively, have a common endpoint of degree $2$, a contradiction of Claim~\ref{claim:deg2claim}. 
	Therefore these vertices must be adjacent.
	Assume, without loss of generality, that this arc is oriented as shown in Figure~\ref{claim:notFourfig}.
	
	If, without loss of generality, only $Z$ contains a vertex corresponding to $z_6$, then such a vertex must have degree $3$ by Claim~\ref{claim:deg2claim}.
	However, in this case, the arc incident with this vertex that is not contained in $Z$ is a cut arc, a contradiction of Claim~\ref{claim:cutclaim}.
	
	To derive a contradiction we find a homomorphism to $QR_7$ for each of the possible orientations of the graphs arising from the partial orientations in Figure~\ref{claim:notFourfig}, where the unoriented edges take orientations so that $Z$ and $Z^\prime$ exist.
	
	Each of $A_{u_1}$ and $A_{u_2}$ contain a copy of a graph from $\mathcal{Z}$.
	Therefore  the edge between $x_1$ and $u_1$ has the same orientation (with respect to $x_1$) as the one between $x_1$ and $v_1$.
	A similar statement holds for the following pairs of edges: 
		\begin{itemize}
			\item $x_2u_1$ and $x_2u_2$ (with respect to $x_2$);
			\item $x_1^\prime u_1$ and $x_1^\prime u_2$ (with respect to $x_1^\prime$); and
			\item $x_2^\prime u_1$ and $x_2^\prime u_2$ (with respect to $x_2^\prime$).
		\end{itemize}
	
	To find such homomorphisms, let $\phi(u) = 0, \phi(v) = 1$ and $\phi(u_1)=\phi(v_1) = \phi(u_2) = \phi(v_2) = 2$.

	We first construct a homomorphism for the case that $z_6$ (i.e., $y$) and $z_6^\prime$ (i.e., $y^\prime$) do not exist.
	If $x_1u_1x_2$ is a $2$-dipath, then so is $x_1v_1x_2$, as $Z \in \mathcal{Z}$.
	In this case, let $\phi(x_1)=0$ and $\phi(x_2) = 4$.
	If $x_1u_1$ is an arc, then so are $x_1v_1, x_2u_1$ and $x_2v_1$, as $Z \in \mathcal{Z}$.
	In this case, let $\phi(x_1)=0$ and $\phi(x_2) = 1$.
	If $u_1x_1$ is an arc, then so are $v_1x_1, u_1x_2$ and $v_1x_2$, as $Z \in \mathcal{Z}$.
	In this case, let $\phi(x_1)=3$ and $\phi(x_2) = 5$.
	A similar argument gives an image of each vertex $x_1^\prime$ and $x_2^\prime$ based on the orientation of the edges with an end at $u_2$.
	This defines a homomorphism to $QR_7$.
	
	We now construct a homomorphism for the case that $z_6$ (i.e., $y$) and $z_6^\prime$ (i.e., $y^\prime$) do  exist.

	Let $\phi(y) =1$ and $\phi(z_6^\prime) = 0$.
	
	If $x_1u_1$ is an arc, then so are $x_1v_1, x_2u_1$ and $x_2v_1$, as $Z \in \mathcal{Z}$.
	In this case, let $\phi(x_1)=0$ and $\phi(x_2) = 5$.
	If $u_1x_1$ is an arc, then so are $v_1x_1, u_1x_2$ and $v_1x_2$, as $Z \in \mathcal{Z}$.
	In this case, let $\phi(x_1)=6$ and $\phi(x_2) = 3$.

	If $x_1^\prime u_2$ is an arc, then so are $x_1^\prime v_2, x_2^\prime u_2$ and $x_2^\prime v_2$, as $Z \in \mathcal{Z}$.
	In this case, let $\phi(x_1^\prime )=5$ and $\phi(x_2^\prime ) = 1$.
	If $u_2x_1^\prime $ is an arc, then so are $v_2x_1^\prime , u_2x_2^\prime $ and $v_2x_2^\prime $, as $Z \in \mathcal{Z}$.
	In this case, let $\phi(x_1^\prime )=3$ and $\phi(x_2^\prime ) = 4$.
		This defines a homomorphism to $QR_7$.
		
	Therefore if each of $A_{v_2}$  and $A_{v_1}$ contains a graph from $\mathcal{Z}$, then $G$ admits a homomorphism to $QR_7$, a contradiction.
	Thus, at least one of $A_{v_2}$  and $A_{v_1}$ admits a homomorphism to $QR_7$.
	However, this implies that there exists $\phi: G-z \to QR_7$ in which $\phi(u)\neq \phi(v)$, a contradiction.
	Therefore $|\{u_1,u_2,v_1,v_2\}| \neq 4$.
	
	\begin{figure}
		\begin{center}
			\includegraphics[width = \linewidth]{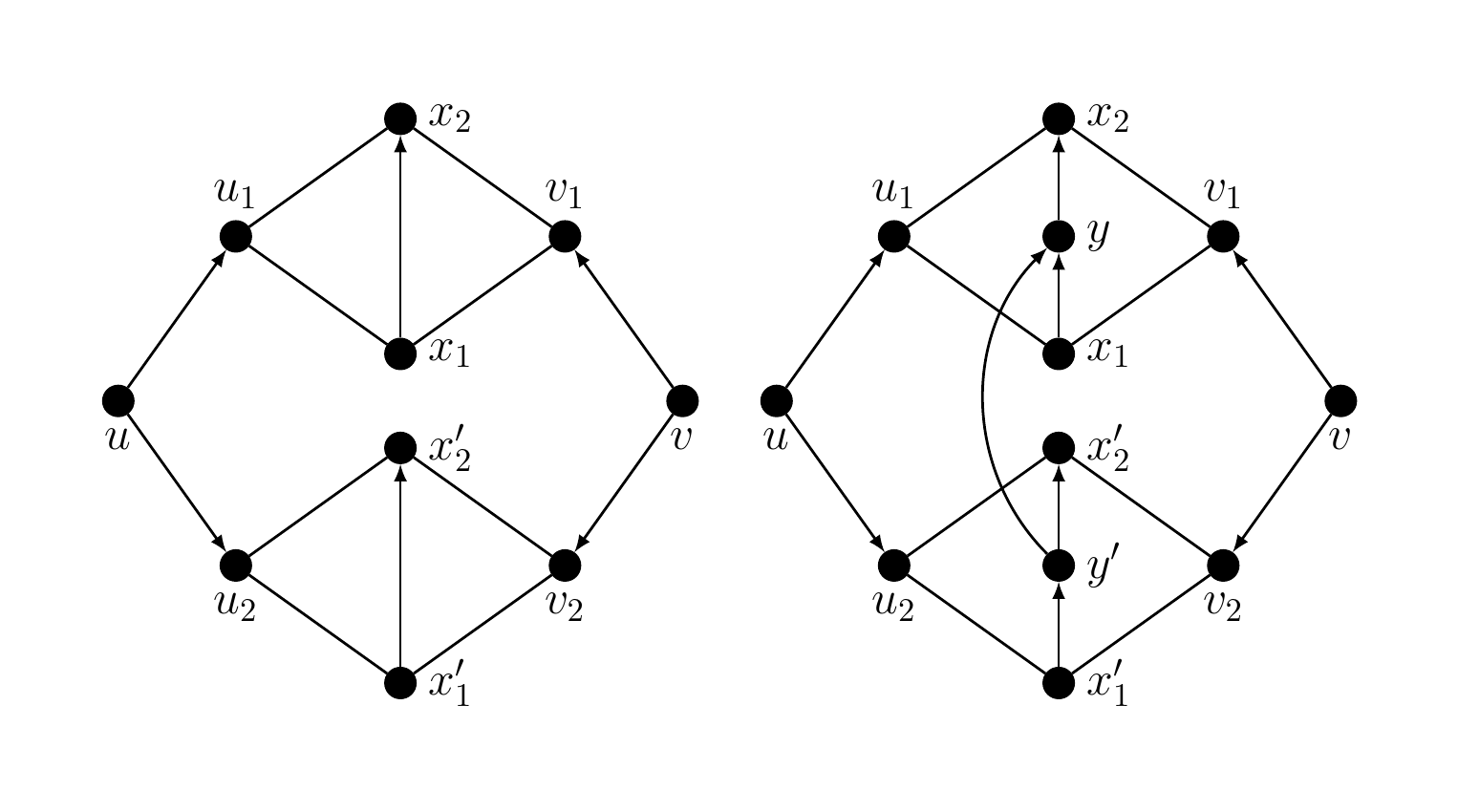}
		\end{center}
		\caption{The possibilities for $G-z$ when $|\{u_1,u_2,v_1,v_2\}| = 4$.}
		\label{claim:notFourfig}
	\end{figure}
	
	Together Claims~\ref{claim:notTwo},\ref{claim:notThree} and~\ref{claim:notFour} contradicts the existence of $G$,  a minimum counter-example. This completes the proof.
\end{proof}

\begin{thm}\label{thm:noadjst}
	If $G$ is a properly subcubic graph with no vertex of out-degree three adjacent to a vertex of in-degree three, then $G$ admits a homomorphism to $QR_7$.
\end{thm}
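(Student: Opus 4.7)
My plan is to prove the theorem by induction on $|V(G)|$. WLOG $G$ is connected, since the hypothesis descends to components and individual homomorphisms to $QR_7$ combine trivially. The inductive step splits on whether $G$ contains a subgraph from $\mathcal{R}$.

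If $G$ is reducible, I apply the Reduction to obtain $G^R$. The crucial observation is that the hypothesis is preserved: the new vertex $r$ has degree $2$; the arcs $r_1z_3$ and $z_4r_2$ of $R$ are replaced in $G^R$ by $r_1r$ and $rr_2$ respectively, so the in- and out-degrees of $r_1$ and $r_2$ are unchanged; and every other surviving vertex retains its degree data from $G$. Hence $G^R$ remains properly subcubic (because $r$ has degree $2$) and contains no vertex of out-degree $3$ adjacent to a vertex of in-degree $3$. By induction $G^R\to QR_7$ (applied componentwise if $G^R$ disconnects, noting that each resulting component contains one of $r,r_1,r_2$, all of degree $\le 2$), and the Reduction Lemma gives $G\to QR_7$.

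If $G$ is reduced, I claim $G$ contains no subgraph isomorphic to an element of $\mathcal{Z}$, so that Lemma~\ref{lem:heavyLift} concludes. Up to isomorphism $\mathcal{Z}$ consists of $Z_1$, $Z_2$, $Z_3$, and $\overline{Z_1}$, and by inspection of Figure~\ref{orient:family} each such oriented graph contains a vertex of out-degree $3$ adjacent to a vertex of in-degree $3$. Since $G$ is subcubic, a vertex of out-degree $3$ in any subgraph of $G$ must have out-degree exactly $3$ in $G$ itself (and likewise for in-degree); thus if such a $Z$ were a subgraph of $G$, the theorem's hypothesis would be violated.

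The main obstacle is precisely this finite case check on the four non-isomorphic elements of $\mathcal{Z}$: for $Z_1$ and $\overline{Z_1}$ the offending adjacency is visible at $z_1$, whereas for $Z_2$ and $Z_3$ (where $z_1,z_2$ may be joined by a $2$-dipath rather than an arc, so the source/sink-of-degree-$3$ pair is not exhibited by $z_1,z_2$ themselves) one must identify the required pair among the remaining vertices of the figure, paying attention to all admissible orientations of the $2$-dipath through $z_5$. A secondary bookkeeping point is ensuring the hypothesis is inherited when $G^R$ becomes disconnected, but this is immediate from the degree-preservation observation above.
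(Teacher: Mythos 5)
Your proposal is correct and takes essentially the same route as the paper: the hypothesis is preserved under the reduction, every element of $\mathcal{Z}$ contains a vertex of out-degree three adjacent to one of in-degree three, and the result follows from the Reduction Lemma and Lemma~\ref{lem:heavyLift}. (One minor bookkeeping slip: if $z_6$ has a third neighbour in $G$, that surviving neighbour loses an arc in $G^R$ and may lie in a component containing none of $r,r_1,r_2$; but since its degrees only decrease, the hypothesis and proper subcubicity are still inherited, so the argument stands.)
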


\begin{proof}
	Let $G$ be properly subcubic graph with no vertex of out-degree three adjacent to a vertex of in-degree three.
	Observe that $G^R$ has this same property.
	 Since every element of $\mathcal{Z}$ contains a vertex with out-degree three adjacent to a vertex with in-degree three, $G^R$ necessarily contains no subgraph from $\mathcal{Z}$. The result follows by The Reduction Lemma and Lemma \ref{lem:heavyLift}.
\end{proof}

Lemma \ref{lem:heavyLift} allows us to bound the oriented chromatic number of $\mathcal{F}_3$.

\begin{thm} \label{orient:9colour}
	Any orientation  $G$ of a connected graph with $\Delta \leq 3$, then $\chi_o(G) \leq 9$
\end{thm}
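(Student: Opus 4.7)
The plan is to exhibit a $9$-vertex tournament $T_9$ such that every connected oriented graph $G$ with $\Delta(G)\le 3$ admits a homomorphism to $T_9$, whence $\chi_o(G)\le 9$. I would take $T_9$ to consist of $QR_7$ augmented with two extra vertices $s$ and $t$, where $s$ is a universal source of $T_9$ and $t$ is a universal sink of $T_9$ (with $s\to t$). The color $s$ is reserved for sources of degree $3$ in $G$, the color $t$ for sinks of degree $3$, and the remaining vertices will receive colors in $QR_7$ via Theorem~\ref{thm:noadjst}.

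Let $S$ denote the set of degree-$3$ sources and $S'$ the set of degree-$3$ sinks of $G$; color $S$ with $s$ and $S'$ with $t$. This partial coloring is consistent because no two sources (respectively, sinks) are adjacent, any arc between $S$ and $S'$ is oriented from $S$ to $S'$ (matching $s\to t$ in $T_9$), and the absence of in-arcs (respectively, out-arcs) at vertices of $S$ (respectively, $S'$) in $G$ matches the fact that $s$ has no in-arcs and $t$ has no out-arcs in $T_9$. It remains to color $G^{*}:=G\setminus(S\cup S')$ by a homomorphism to $QR_7$. A key structural observation is that $G^{*}$ has no degree-$3$ source or sink of its own: any vertex of degree $3$ in $G^{*}$ has all three of its $G$-neighbors in $V(G^{*})$, so its in- and out-degrees in $G^{*}$ coincide with those in $G$, and in particular it cannot be a degree-$3$ source or sink of $G^{*}$ (else it would be one of $G$). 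Hence if $G^{*}$ is properly subcubic, Theorem~\ref{thm:noadjst} applied componentwise delivers the required $QR_7$-coloring and we are done.

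The only remaining case is $G^{*}$ cubic, which by a short connectivity argument (every degree-$3$ vertex of $G^{*}$ has all its $G$-neighbors inside $V(G^{*})$, so $V(G^{*})$ is a union of components of $G$) forces $S=S'=\emptyset$ and $G^{*}=G$. The theorem thus reduces to showing that every connected cubic oriented graph $G$ with no degree-$3$ source or sink maps to $T_9$. For such a $G$ I would delete any vertex $v$; the resulting properly subcubic graph $G-v$ still has no degree-$3$ source or sink, since every degree-$3$ vertex of $G-v$ is non-adjacent to $v$ and therefore retains its $G$-in/out-degrees. Theorem~\ref{thm:noadjst} then yields a homomorphism $\phi\colon G-v\to QR_7$.

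The main obstacle is extending $\phi$ to $v$. Because $v$ has both an in-neighbor and an out-neighbor in $G$, it cannot be colored $s$ or $t$, so it must receive a color in $QR_7$. However $QR_7$ satisfies only Property $P_{2,1}$, whereas $v$ imposes three simultaneous constraints on its color. I would circumvent this by arranging that two of $v$'s three neighbors share a color under $\phi$, which reduces the extension problem at $v$ to a $P_{2,1}$-instance (the third possibility, where an out-neighbor and an in-neighbor share a color, is already ruled out since it would require $x\to c$ and $c\to x$ in $QR_7$). Concretely, I would form an auxiliary properly subcubic graph $G'$ from $G-v$ by identifying a suitably chosen pair of non-adjacent neighbors of $v$, verify that $G'$ still satisfies the hypotheses of Theorem~\ref{thm:noadjst}, and pull the resulting $QR_7$-coloring of $G'$ back to $G-v$ to obtain a coloring in which two of $v$'s neighbors share a color. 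The most delicate part of the argument is ensuring that such an identification is always possible without creating a vertex of degree greater than $3$, a loop, or a new forbidden configuration; residual degenerate neighborhoods of $v$ must be handled by ad hoc analysis, perhaps by assigning $s$ or $t$ to a carefully chosen neighbor of $v$ rather than to $v$ itself.
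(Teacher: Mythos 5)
Your first two steps are sound and essentially mirror the paper's handling of sources and sinks: colouring the degree-$3$ sources and sinks with two reserved colours (a universal source and universal sink added to $QR_7$) and colouring the rest via Theorem~\ref{thm:noadjst} is fine, and your connectivity argument correctly isolates the remaining case, namely $G$ cubic with no source and no sink. The genuine gap is in that remaining case. Your plan is to delete a vertex $v$, obtain $\phi\colon G-v\to QR_7$, and force two same-side neighbours of $v$ to share a colour by identifying two non-adjacent neighbours of $v$ in an auxiliary graph $G'$ and re-applying Theorem~\ref{thm:noadjst}. But in a cubic $G$ every neighbour of $v$ has degree $2$ in $G-v$, so identifying two of them generically produces a vertex of degree $4$ (unless they happen to share another neighbour); $G'$ is then not even subcubic, and Theorem~\ref{thm:noadjst} does not apply. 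Even when the degrees work out, the identification can create exactly the forbidden out-degree-$3$/in-degree-$3$ adjacencies. More fundamentally, arranging that two prescribed vertices receive equal (or controlled) images under a $QR_7$-homomorphism is the hard core of this subject --- it is what the paper's Lemma~\ref{lem:heavyLift}, with its reduction machinery and the families $\mathcal{Z}$ and $\mathcal{R}$, is built to control --- so relegating it to ``ad hoc analysis'' leaves the case unproved.

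The missing idea is that you need not stay inside a fixed $9$-vertex target with a universal source and sink in this case; an oriented colouring only has to satisfy conditions (1) and (2). The paper's proof removes a single arc $uv$ (the hypothesis of Theorem~\ref{thm:noadjst} is vacuous here, since a source-free, sink-free cubic graph has no vertex of out-degree or in-degree three), takes $\phi\colon G-\{uv\}\to QR_7$, and then recolours $u$ and $v$ with two brand-new colours used nowhere else. This is a valid oriented $9$-colouring: the only possible violation at the new colour on $u$ would require an in-neighbour and an out-neighbour of $u$ with equal colours, which is impossible because $\phi$ sends them into the disjoint sets $N^-(\phi(u))$ and $N^+(\phi(u))$ of the tournament $QR_7$ (similarly for $v$, and the pair of new colours only sees the arc $uv$). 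Your reserved colours $s$ and $t$ cannot play this role because a vertex of a source-free, sink-free cubic graph has both in- and out-arcs, which is exactly why your construction gets stuck; switching to the ``spend two fresh colours on the endpoints of a deleted arc'' trick in this case closes the gap without any new structural work.
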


\begin{proof}
	Let $G$ be an oriented connected cubic graph. 
	If $G$ contains no source or no sink, then removing a single arc $uv$ from $G$ yields an oriented graph which satisfies the hypothesis of Lemma \ref{thm:noadjst}. 
	Let $\phi: G-\{uv\} \to QR_7$ be a homomorphism.
	We extend $\phi$ to be a homomorphism of $G$ to $QR_7$, by letting $\phi(u) = 7$ and $\phi(v) = 8$.
	Otherwise assume, without loss of generality  $G$ contains a source vertex. Let $S$ be the set of source vertices in $G$. 
	Observe that $G-S$ satisfies the hypothesis of Theorem  \ref{thm:noadjst}. 
	Let $\phi: G-S \to QR_7$ be a homomorphism.
	We extend $\phi$ to be a homomorphism of $G$ to $QR_7$, by letting $\phi(s) = 7$ for all $s \in S$.
\end{proof}

\begin{cor}
If $G$ is an oriented cubic graph that contains a source or a sink, then $\chi_o(G) \leq 8$.
\end{cor}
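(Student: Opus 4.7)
The plan is to extract the second case of the proof of Theorem~\ref{orient:9colour}, which already achieves an $8$-colour bound, and to reduce the sink case to the source case by converse symmetry. Specifically, oriented colourings are self-converse: if $\phi$ is an oriented $k$-colouring of $G$, then it is also one of its converse (condition (1) is plainly symmetric in orientation, and condition (2) reads the same after swapping the roles of heads and tails). Since the converse of a cubic graph with a sink is a cubic graph with a source, it suffices to treat the source case, so we may assume $G$ is cubic with source set $S \neq \emptyset$.

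Next I would verify that $G - S$ satisfies the hypothesis of Theorem~\ref{thm:noadjst}. It is properly subcubic because $G$ is cubic and at least one vertex has been removed. The crux is that $G - S$ has \emph{no} vertex of out-degree $3$: any out-neighbour of a vertex $u$ in $G$ has $u$ as an in-neighbour and so is not a source, which means the out-degree of $u$ in $G - S$ equals its out-degree in $G$. If some $u \in V(G) \setminus S$ had out-degree $3$ in $G - S$, it would have out-degree $3$ in the cubic graph $G$, forcing $u$ to be a source of $G$, contrary to $u \notin S$. In particular, no vertex of out-degree $3$ is adjacent to a vertex of in-degree $3$ (vacuously), so Theorem~\ref{thm:noadjst} furnishes a homomorphism $\phi : G - S \to QR_7$.

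Finally, I would extend $\phi$ to all of $G$ by setting $\phi(s) = 7$ for every $s \in S$, and check that the result is a valid oriented $8$-colouring. The newly introduced arcs are precisely the out-arcs of sources; their other endpoints lie in $V(G) \setminus S$ and hence receive $\phi$-images in $\{0, 1, \ldots, 6\} \neq \{7\}$, so condition (1) is preserved. For condition (2), any conflict would require arcs $uv, xy$ with $\phi(u) = \phi(y)$ and $\phi(v) = \phi(x)$ involving the new colour $7$; but a source can never be the head of an arc, so neither $v$ nor $y$ lies in $S$, meaning no such conflict uses the new colour on the head side, and the only remaining case reduces to a conflict already ruled out by $\phi$. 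Thus $\chi_o(G) \leq 8$.

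The main obstacle is verifying that $G - S$ satisfies the hypothesis of Theorem~\ref{thm:noadjst}; once the degree-tracking observation above is in place, the rest is routine bookkeeping on the extension step.
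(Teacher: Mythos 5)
Your proposal is correct and matches the paper's own reasoning: the corollary is exactly the second case of the proof of Theorem~\ref{orient:9colour}, where the source set $S$ is deleted, Theorem~\ref{thm:noadjst} supplies a homomorphism of $G-S$ to $QR_7$, and all sources receive the single extra colour $7$ (the sink case being the converse, which the paper handles with ``without loss of generality''). Your added verifications that $G-S$ has no out-degree-$3$ vertex and that the extension satisfies both colouring conditions are just the details the paper leaves implicit.
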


\section{Oriented Colourings of Graphs with Maximum Degree Four}\label{deg4}
Let  $\mathcal{F}_4$ be the family of orientations of connected graphs with maximum degree $4$.
As with our improved bound for orientations of connected cubic graphs, we use a non-zero quadratic residue tournament  as a means to construct a target on no more than $69$  vertices for members of this family. 
This construction gives an upper bound for the oriented chromatic number of the family of orientations of graphs with maximum degree $4$.

We begin by observing two properties of the Paley tournament on $67$ vertices.  

\begin{prop} [\cite{GS71}]
	$QR_{67}$ has Property $P_{4,1}$ and Property $P_{3,2}$.
\end{prop}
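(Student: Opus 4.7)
The plan is to apply the standard character-sum method for Paley tournaments. Let $\chi$ denote the quadratic residue character on $\mathbb{F}_{67}$, extended by $\chi(0) = 0$. Since $67 \equiv 3 \pmod{4}$, we have $\chi(-1) = -1$, and by the construction of $QR_{67}$, for distinct $x, y$ the membership $x \in N^+(y)$ is equivalent to $\chi(x - y) = +1$ and $x \in N^-(y)$ to $\chi(x - y) = -1$. Given distinct $x_1, \ldots, x_k \in \mathbb{F}_{67}$ and a sign vector $(\epsilon_1, \ldots, \epsilon_k) \in \{+1, -1\}^k$, I would lower-bound the number $N$ of vertices $y \notin \{x_1, \ldots, x_k\}$ with $\chi(x_r - y) = \epsilon_r$ for every $r$.

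Using that $(1 + \epsilon_r \chi(x_r - y))/2$ is the indicator of $\chi(x_r - y) = \epsilon_r$ when $y \neq x_r$, the count $N$ satisfies
\[
2^k N \;=\; \sum_{y \in \mathbb{F}_{67}} \prod_{r=1}^{k} \bigl(1 + \epsilon_r \chi(x_r - y)\bigr) \;-\; \sum_{j=1}^{k} \prod_{r \neq j} \bigl(1 + \epsilon_r \chi(x_r - x_j)\bigr),
\]
where the second sum subtracts off the contribution of the omitted points $y = x_j$. Expanding the first product over subsets $S \subseteq \{1, \ldots, k\}$ contributes $q = 67$ from $S = \emptyset$, zero from each singleton (since the total sum of $\chi$ over $\mathbb{F}_{67}$ vanishes), and for each $|S| \geq 2$ a character sum $\sum_y \chi\!\bigl(\prod_{r \in S}(x_r - y)\bigr)$ bounded in absolute value by $(|S| - 1)\sqrt{67}$ via Weil. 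The correction term has total magnitude at most $k \cdot 2^{k-1}$.

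Assembling these estimates, for $(k, j) = (3, 2)$ one obtains $8N \geq 67 - 5\sqrt{67} - 12 \approx 14.1$, so $N \geq 2$ and Property $P_{3, 2}$ follows. For $(k, j) = (4, 1)$ the analogous bound is $16N \geq 67 - 17\sqrt{67} - 32$, which is negative: the naive Weil estimate is not tight enough at $q = 67$. The main obstacle is therefore $P_{4, 1}$, and to close the gap I see two complementary routes. First, one can replace the blanket Weil bound on the $|S| = 2$ contributions by their exact evaluation --- each equals $-1$ --- and seek further cancellation among the $|S| = 3, 4$ Jacobsthal-type sums via their explicit evaluation in terms of representations by binary quadratic forms. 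Second, and more robustly, one can complete the verification by direct computer enumeration over the $\binom{67}{4}\cdot 2^{4}$ patterns of $(x_r, \epsilon_r)$; this is a modest finite check and is the standard route by which assertions of the form quoted from \cite{GS71} are confirmed for specific small $q$.
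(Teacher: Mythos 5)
The paper does not prove this proposition at all: it is quoted from \cite{GS71} and used as a black box, so there is no internal argument to compare against. Your character-sum computation does correctly establish the $P_{3,2}$ half (the setup, the vanishing of the singleton terms, the Weil bound $(|S|-1)\sqrt{67}$ for square-free products, and the correction of at most $k\cdot 2^{k-1}$ are all right, and $8N \geq 67 - 5\sqrt{67} - 12 > 8$ forces $N \geq 2$), which is already more than the paper offers.

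However, for $P_{4,1}$ your proposal has a genuine gap: as you yourself compute, the blanket Weil estimate gives $16N \geq 67 - 17\sqrt{67} - 32 < 0$, and neither of your proposed repairs is actually carried out. Moreover, the first repair cannot close the gap on its own: replacing each $|S|=2$ sum by its exact value $-1$ still leaves the $|S|=3,4$ terms bounded only by $(8+3)\sqrt{67} \approx 90 > 67$, and since the $|S|=3$ sums are traces of elliptic curves over $\mathbb{F}_{67}$, the Hasse--Weil bound is essentially attained for some configurations, so no uniform analytic estimate of this shape can work at $q=67$; one would need configuration-specific evaluations of the cubic and quartic sums, which is tantamount to the finite verification you mention. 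That finite check (over all $\binom{67}{4}\cdot 2^4$ patterns, exploiting arc-transitivity to cut the work) is the step that actually certifies $P_{4,1}$, and your write-up defers it rather than performing it, so the statement as a whole is not yet proved by your argument.
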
 

\begin{prop}
	$QR_{67}$ is vertex transitive, arc transitive and self-converse.
\end{prop}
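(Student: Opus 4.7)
The plan is to verify each of the three properties by exhibiting explicit automorphisms (or isomorphisms to the converse) built from the affine structure of the field $\mathbb{F}_{67}$. These are standard facts about Paley tournaments; the proof is essentially the verification that the usual three families of maps preserve, or reverse, quadratic residuosity of differences. Let $Q$ denote the set of nonzero quadratic residues in $\mathbb{F}_{67}$, recalling that arcs of $QR_{67}$ are pairs $(i,j)$ with $j-i \in Q$.

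\textbf{Vertex transitivity.} For each $c \in \mathbb{F}_{67}$, consider the translation $\tau_c : x \mapsto x + c \pmod{67}$. Since $\tau_c(j) - \tau_c(i) = j-i$, the map $\tau_c$ preserves every arc. Given any two vertices $x$ and $y$, the translation $\tau_{y-x}$ sends $x$ to $y$, so $QR_{67}$ is vertex transitive.

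\textbf{Arc transitivity.} For each $r \in Q$ consider the dilation $\mu_r : x \mapsto rx \pmod{67}$. Since $Q$ is closed under multiplication (it is the unique index-two subgroup of $\mathbb{F}_{67}^\times$), we have $\mu_r(j) - \mu_r(i) = r(j-i) \in Q$ whenever $j-i \in Q$, so $\mu_r$ is an automorphism. Given arcs $(i,j)$ and $(i',j')$, first apply $\tau_{-i}$, then the dilation $\mu_r$ with $r = (j'-i')(j-i)^{-1}$ (which lies in $Q$ since both $j-i$ and $j'-i'$ do), then $\tau_{i'}$; the composition sends $(i,j)$ to $(i',j')$.

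\textbf{Self-converse.} Consider $\sigma : x \mapsto -x \pmod{67}$. Since $67 \equiv 3 \pmod 4$, the element $-1$ is a nonresidue in $\mathbb{F}_{67}$, so multiplication by $-1$ sends $Q$ to its complement in $\mathbb{F}_{67}^\times$. Thus whenever $(i,j)$ is an arc, i.e. $j - i \in Q$, we have $\sigma(j) - \sigma(i) = -(j-i) \notin Q$, so $\sigma(i)\sigma(j)$ is not an arc while $\sigma(j)\sigma(i)$ is; that is, $\sigma$ reverses the orientation of every arc. Hence $\sigma$ is an isomorphism from $QR_{67}$ to its converse.

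The only nontrivial ingredient is the classical fact that $-1$ is a quadratic nonresidue modulo a prime $p$ iff $p \equiv 3 \pmod 4$, together with the fact that the nonzero quadratic residues form a subgroup of index two in $\mathbb{F}_p^\times$; since $67$ is prime and $67 \equiv 3 \pmod 4$, both facts apply directly and no real obstacle arises. Indeed the same argument establishes the analogous statement for every $QR_q$ with $q \equiv 3 \pmod 4$ a prime power, which is why the proposition is likely to be stated without proof (or with a reference to the general Paley-tournament literature cited earlier).
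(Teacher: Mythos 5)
Your proof is correct: the paper states this proposition without proof, treating it as a well-known fact about Paley tournaments, and your verification via translations $x \mapsto x+c$, dilations $x \mapsto rx$ with $r$ a quadratic residue, and the negation map $x \mapsto -x$ (using that $-1$ is a nonresidue since $67 \equiv 3 \pmod 4$) is exactly the standard argument the paper implicitly relies on.
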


\begin{lem} \label{orient:QR67}
	Every orientation of a connected properly subquartic graph admits a homomorphism to $QR_{67}$.
\end{lem}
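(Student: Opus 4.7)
Following the template of Lemma~\ref{lem:heavyLift}, I would argue by minimum counter-example with respect to number of vertices, and subject to that, number of arcs. Let $G$ be such a minimum counter-example: a connected properly subquartic oriented graph with no homomorphism to $QR_{67}$. Since $G$ is properly subquartic, there exists a vertex $v$ with $d(v) \leq 3$. Each component of $G - v$ is a connected properly subquartic oriented graph on fewer vertices than $G$ (each such component contains a neighbour of $v$ whose degree in the component is at most $3$), so by minimality each admits a homomorphism to $QR_{67}$; these combine into a homomorphism $\phi \colon G - v \to QR_{67}$.

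To extend $\phi$ to $v$ it suffices to find $y \in V(QR_{67})$ satisfying $\phi(u_i) \in N^{n_i}(y)$ for every neighbour $u_i$ of $v$, where $n_i \in \{+,-\}$ records the orientation of the arc between $u_i$ and $v$. Since $QR_{67}$ has Property $P_{4,1}$, and hence $P_{d,1}$ for every $d \leq 4$, such a $y$ exists whenever the images $\phi(u_1), \ldots, \phi(u_d)$ are pairwise distinct, and more generally whenever no pair of neighbours of $v$ sharing an image under $\phi$ are the endpoints of a $2$-dipath through $v$.

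The only obstruction is therefore a $2$-dipath $u_i v u_j$ in $G$ with $\phi(u_i) = \phi(u_j)$: the common image would need to be both an out- and an in-neighbour of $y$ in the oriented graph $QR_{67}$, which is impossible. To eliminate the conflict I would modify $\phi$ locally. As $u_j$ has at most three neighbours in $G - v$, Property $P_{3,2}$ guarantees at least two admissible images for $u_j$ compatible with $\phi$ on $N_{G-v}(u_j)$; since these two choices are distinct, at least one differs from $\phi(u_i)$. Recolouring $u_j$ accordingly yields $\phi' \colon G - v \to QR_{67}$ with $\phi'(u_i) \neq \phi'(u_j)$, after which $P_{d,1}$ extends $\phi'$ to $v$.

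The main obstacle I anticipate is that a single recolouring may introduce a fresh conflict with the remaining neighbour $u_k$ of $v$ (when $d(v) = 3$) if the only alternative image for $u_j$ guaranteed by $P_{3,2}$ happens to coincide with $\phi(u_k)$ in a $2$-dipath orientation pattern through $v$. Resolving this cleanly is expected to require a case analysis on the sign pattern $(n_1,n_2,n_3)$ and the coincidences among $\phi(u_1),\phi(u_2),\phi(u_3)$, supplemented by the option of recolouring $u_i$ rather than $u_j$, of iterating the recolouring, or of selecting a different vertex $v$ of degree at most~$3$. As in the proof of Lemma~\ref{lem:heavyLift}, I would expect the bulk of the technical work to consist in showing that each such obstruction can be overcome, possibly after first establishing structural restrictions on $G$ (no cut arcs, restricted $2$-edge cuts, a unique vertex of degree less than~$4$, and so on) analogous to Claims~\ref{claim:cutclaim}--\ref{claim:notFour}, each ultimately yielding a contradiction to the minimality of $G$.
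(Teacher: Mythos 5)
Your outline is the paper's argument: take a minimum counter-example, pick a vertex $z$ of degree at most $3$, colour $G-z$ by minimality, extend via $P_{4,1}$ unless a $2$-dipath through $z$ forces two neighbours carrying opposite signs to share a colour, and repair such a coincidence by re-choosing the image of one neighbour, which has at most three coloured neighbours, so $P_{3,2}$ supplies at least two admissible images and one of them avoids the offending colour. Your degree-$1$ and degree-$2$ cases are complete and coincide with the paper's (its Case II is exactly your single recolouring, phrased as deleting $u$ along with $z$ and re-extending). One small point you should make explicit: the images of $N_{G-z}(u_j)$ need not be distinct, but since $\phi$ is already a homomorphism, neighbours with equal images impose the same sign, so the constraints collapse onto at most three distinct vertices and $P_{3,2}$ still applies.

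The genuine gap is that the degree-$3$ case — the only place where real work remains — is not carried out: you name the possible obstruction (recolouring $u_j$ creating a new clash with $u_k$) and list remedies, but you neither resolve it nor show it cannot occur, and you suggest it may need heavy structural claims in the style of Claims~\ref{claim:cutclaim}--\ref{claim:notFour}. In fact none of that machinery is needed, and the obstruction you fear evaporates after one normalization that your sketch omits: since $QR_{67}$ is self-converse, one may assume the three arcs at $z$ are not all alike (else $P_{3,1}$ extends immediately) and, passing to the converse if necessary, that $z$ has exactly one in-neighbour $u$ and two out-neighbours $v,w$. Then every harmful coincidence involves $\phi(u)$, while a coincidence between $v$ and $w$ is irrelevant (same sign); so recolouring whichever of $v,w$ clashes with $u$, each time choosing the $P_{3,2}$-alternative different from $\phi(u)$, can never create a fresh conflict, and if both clash they are pairwise non-adjacent so the two recolourings do not interfere. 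The paper packages this as: assume every homomorphism of $G-z$ gives $u$ the image of $v$ or $w$; deduce $u$ is adjacent to at most one of them; if $u,v,w$ are independent, re-extend to all three using lists of size at least two from $P_{3,2}$ and the $2$-choosability of the path $v'u'w'$; if $u$ is adjacent to (say) $v$, only $w$ needs re-choosing. Until you supply this (or an equivalent) closing argument, the proposal proves the lemma only for minimum-degree at most $2$.
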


\begin{proof}
	Let $G$ be a minimum counter-example with respect to number of vertices and subject to that with respect to the number of arcs. 
	We consider cases based on the minimum degree of a vertex in $G$. 
	Let $z$ be a vertex of minimum degree in $G$. 
	Since $G$ is properly subquartic, it must be that $z$ has degree $1$, $2$ or $3$.
	In each case we derive a contradiction.
	Since $G$ is a minimum counter-example and $QR_{67}$ is vertex and arc transitive, we observe that $G$ contains no cut vertex or cut edge.
	By the minimality of $G$, there is a homomorphism of $QR_{67}$ to $G-z$.
	
	\emph{Case I: $z$  has degree $1$.}\\ 
	If $z$ has degree $1$, then the arc incident with $z$ is a cut arc, a contradiction, as $G$ contains no cut vertex.
		
	\emph{Case II: $z$ has degree $2$.}\\ 
	Let $u$ and $v$ be the neighbours of $z$ in $G$. 
	If both $u$ and $v$ have $z$ as an out-neighbour (respectively in-neighbour), then any homomorphism of $G-z$ to $QR_{67}$ can be extended to include $z$, since $QR_{67}$ has Property $P_{2,1}$. 
	Thus, without loss of generality, we may assume that $uz, zv \in E(G)$. 
	We may further assume that in every homomorphism of $G-z$ to $QR_{67}$, $u$ and $v$ have the same image, as otherwise any homomorphism of $G-z$ to $QR_{67}$ can be extended to include $z$ by using Property $P_{2,1}$ of $QR_{67}$.
	
	Consider a homomorphism $\phi: G -\{u,z\} \to QR_{67}$. 
	Using Property $P_{3,2}$, $\phi$ can be extended in at least two ways to include $u$. 
	In particular $\phi$ can be extended in such a way that $u$ and $v$ have a different image, a contradiction as we assumed  in every such homomorphism  $u$ and $v$ have the same image.

	\emph{Case III: $z$ has degree $3$.}\\ 
	Let $u,v,w$ be the neighbours of $z$ in $G$. 
	Following \emph{Case II} we may assume that there is a $2$-dipath with centre vertex $z$ in $G$.
	Since $QR_{67}$ is self converse, we assume, without loss of generality, that $uz,zv,zw \in E(G)$.
	Further, following \emph{Case II} we assume in every homomorphism of  $G-z$ to $QR_{67}$ that $u$ has the same image of at least one of $v$ and $w$. 
	From this it follows that $u$ is adjacent to at most one of $v$ and $w$.
	We proceed based on the existence of arcs between $u,v$ and $w$.
	
	\emph{Subcase III.i: $u,v,w$ form an independent set.}\\
	Consider a homomorphism $\phi:  G -\{u,v,w,z\} \to QR_{67}$. 
	We extend $\phi$ to be a homomorphism of $QR_{67}$ to $G -z$ so that $u$ does not have the same image as either $v$ or $w$,  thus deriving a contradiction.
	Since $QR_{67}$ has Property $P_{3,2}$, there are at least two choices for each of $u,v,w$ when extending $\phi$. 
	Since these vertices form an independent set, the chosen image of either $u,v$, or $w$ in an extension of $\phi$ does not affect the list of choices for the other two vertices.
	By hypothesis, no matter how these choices are made, it must be that $u$ has the same image as at least one of $v$ or $w$. 
	Consider a graph with vertex set $\{u^\prime, v^\prime, w^\prime\}$ and edge set $\{u^\prime v^\prime, u^\prime w^\prime\}$. 
	If we assign to $u^\prime$ (respectively $v^\prime$ and $w^\prime$) the same list of vertices of $QR_{67}$ that are available for $u$ (respectively $v$ and $w$) when extending $\phi$, then a list colouring of this constructed graph corresponds exactly to an extension of $\phi$ to include $u,v$ and $w$ where $u$ does not have the same image either as $v$ or $w$. 
	Since $QR_{67}$ has Property $P_{3,2}$ each of these lists has cardinality at least two. 
	Since this constructed graph, $K_{2,1}$, is $2$-list colourable, such an extension must exist, a contradiction as we assumed that in every such homomorphism  $u$ has the same image of at least one of $v$ and $w$.
	
	\emph{Subcase III.ii: $u,v,w$ do not form an independent set}. \\
	Since $u$ is adjacent to at most one of $v$ and $w$, we assume, without loss of generality, that there is an arc between $u$ and $v$ (in some direction), but no arc between $u$ and $w$.
	Consider a homomorphism $\phi: G -\{w,z\}\to QR_{67}$. 
	Since $QR_{67}$ has Property $P_{3,2}$, $\alpha$ can be extended to include $w$ in at least two ways. 
	In particular, $\phi$ can be extended to include $w$ so that $\phi(w) \neq \phi(u)$.
	Since $u$ and $v$ are adjacent, there is a homomorphism of $G-z$ to $QR_{67}$ in which $u$ does not have the same image as $v$ or $w$, a contradiction.
\end{proof}

\begin{thm} \label{orient:degree4}
	For the family, $\mathcal{F}_4$, of orientations of connected graphs with maximum degree at most four,  $11 \leq \chi_o(\mathcal{F}_4) \leq 69$.\end{thm}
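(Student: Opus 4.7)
The plan is to establish the two inequalities separately. The lower bound $\chi_o(\mathcal{F}_4) \geq 11$ is immediate from Section~3: the $11$-vertex $4$-regular orientation shown on the right of Figure~\ref{7clique} is stated there to be an oriented clique, so its oriented chromatic number equals $11$, and it clearly lies in $\mathcal{F}_4$.

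For the upper bound, Lemma~\ref{orient:QR67} is the main engine: every connected properly subquartic oriented graph admits a homomorphism to $QR_{67}$. The plan is therefore to reduce an arbitrary $G \in \mathcal{F}_4$ to such instances, accounting for any extra colours required, by case-splitting on whether $G$ is $4$-regular and on the presence of sources and sinks, mirroring the analogous step in the proof of Theorem~\ref{orient:9colour}. If $G$ is not $4$-regular, then $G$ is connected and properly subquartic, so Lemma~\ref{orient:QR67} yields $G \to QR_{67}$, using only $67$ colours.

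If $G$ is $4$-regular and contains a source (or, by a symmetric argument, a sink), let $S$ denote the set of sources. Since a source has no in-neighbour, $S$ is an independent set of $U(G)$; by connectedness of $G$ together with non-emptiness of $S$, each connected component of $G-S$ must contain a vertex adjacent in $G$ to some element of $S$, and hence a vertex of degree at most $3$, so every component of $G-S$ is properly subquartic. Apply Lemma~\ref{orient:QR67} componentwise and assign the new colour $67$ to every vertex of $S$ to obtain an oriented $68$-colouring of $G$. Finally, if $G$ is $4$-regular with neither a source nor a sink, pick any arc $uv \in E(G)$; in $G-\{uv\}$ both $u$ and $v$ have degree $3$, so every component of $G-\{uv\}$ is properly subquartic. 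Apply Lemma~\ref{orient:QR67} componentwise to obtain a homomorphism $\phi: G-\{uv\} \to QR_{67}$, and then redefine $\phi(u) := 67$ and $\phi(v) := 68$.

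The main, though modest, obstacle is verifying that the recoloured $\phi$ in the last case remains an oriented colouring of $G$. Condition~(1) is clear because the two new colours occur only at $u$ and $v$, and the arc $uv$ receives two different colours. For condition~(2), the colours $67$ and $68$ each occur at a unique vertex, so the only way a pair of arcs in $G$ could violate condition~(2) using one of them would force both $uv$ and $vu$ to be arcs of $G$, which is impossible because $G$ is oriented; collisions that avoid the new colours are excluded because $\phi$ was a homomorphism to $QR_{67}$ on $G-\{uv\}$. Combining the three cases gives $\chi_o(G) \leq 69$.
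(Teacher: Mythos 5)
Your proposal is correct and follows essentially the same route as the paper: the lower bound comes from the $11$-vertex oriented clique of Figure~\ref{7clique}, and the upper bound from Lemma~\ref{orient:QR67} plus two new colours. The only difference is that the paper skips your case analysis on regularity and sources/sinks and simply deletes an arbitrary arc $uv$, applies Lemma~\ref{orient:QR67} (implicitly componentwise), and colours $u$ and $v$ with the two extra colours; your explicit verification that the recoloured map still satisfies conditions (1) and (2) is more careful than what the paper records.
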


\begin{proof}
	Figure~\ref{7clique} gives a member of  $\mathcal{F}_4$ that is an oriented clique on $11$ vertices.
	Therefore  $\chi_o(\mathcal{F}_4)\geq 11$.

	Let $G$ be an oriented graph with maximum degree $4$. 
	Consider $uv \in E(G)$.
	By Lemma~\ref{orient:degree4}, there exists a homomorphism  $\phi: G - \{uv\} \to QR_{67}$.
	We extend $\phi$ to be homomorphism of $G$ to a target on no more than $69$ vertices by letting $\phi(u) = 67$ and $\phi(v) = 68$.	
\end{proof}

\section{Future Directions and Conclusions}
In \cite{SO97} Sopena conjectures that every orientation of a connected cubic graph admits a $7$-colouring. 
Though we have not settled this conjecture we have improved the previous best known upper bound for the chromatic number of this family of graphs.
For every pair of oriented connected properly subcubic graphs $G_1$ and $G_2$, there exists an oriented connected properly subcubic graph $G$ that contains $G_1$ and $G_2$ as subgraphs.
Therefore there exists a tournament $T$ such that $T$ is a universal target for the family of oriented connected properly subcubic graphs and the order of $T$ is exactly the oriented chromatic number of this family. 
Finding such a $T$ on seven vertices would provide a much shorter proof for Theorem~\ref{orient:9colour}.
Further, if this $T$ has more than seven vertices, it would show the conjecture to be false. 
Computer search for such a tournament has eliminated $126$ of the $456$ non-isomorphic tournaments on seven vertices by observing that such a $T$ must contain every oriented  properly subcubic clique as a subgraph.

The method used for cubic graphs in Section \ref{cubic} could quite reasonably be applied for the case of graphs with maximum degree four. The tournament $QR_{11}$ could conceivably play the role of $QR_7$ as it has Property $P_{3,1}$ and the transitivity properties of $QR_7$.
Such a method could yield an improved bound of $13$ for the oriented chromatic number of quartic graphs. The oriented clique in Figure \ref{7clique} suggests that such a bound would be close to best possible.

\bibliographystyle{abbrv}
\bibliography{references}

\end{document}